\providecommand{\IS}[1]{\ensuremath{x_{#1}}\xspace} 
\providecommand{\EIS}[1]{\ensuremath{x^*_{#1}}\xspace} 
\providecommand{\ISV}{\ensuremath{\bm{x}}\xspace}  
\providecommand{\EISV}{\ensuremath{\bm{x^*}}\xspace} 
\providecommand{\Neigh}[2][]{\ensuremath{%
    \ifthenelse{\equal{#1}{}}{\mathcal{N}_{#2}} {\mathcal{N}_{#2}^{(#1)}}}\xspace} 
\providecommand{\NeInv}[2][]{\ensuremath{%
    \ifthenelse{\equal{#1}{}}{n_{#2}}{n_{#2}^{(#1)}}}\xspace} 
\providecommand{\ENeInv}[2][]{\ensuremath{%
    \ifthenelse{\equal{#1}{}}{n^\ast_{#2}}{n^\ast_{#2}^{(#1)}}}\xspace} 
\providecommand{\InvCost}[1]{\ensuremath{c_{#1}}\xspace} 
\providecommand{\EdgeCost}[1]{\ensuremath{\gamma_{#1}}\xspace} 
\providecommand{\EdgeSetCostTmp}[1]{\ensuremath{\hat{\gamma}^{(#1)}}\xspace} 
\providecommand{\InvDeg}[1]{\ensuremath{D_#1}\xspace} 
\providecommand{\Threshold}[1]{\ensuremath{\theta_{#1}}\xspace} 
\providecommand{\InvPlayer}{\ensuremath{\mathcal{I}}\xspace} 
\providecommand{\Pack}[1]{\ensuremath{\mathcal{P}_#1}\xspace}
\providecommand{\Degree}[2][]{\ensuremath{
    \ifthenelse{\equal{#1}{}}{d_{#2}}{d_{#1}(#2)}
}\xspace} 
\providecommand{\NDDS}[1][]{\ensuremath{
    \textsc{NDDS}
}\xspace} 
\providecommand{\DiEdge}[2]{\ensuremath{ #1 \rightarrow #2 }\xspace} 
\providecommand{\NUM}[1][]{\ensuremath{%
\ifthenelse{\equal{#1}{}}{K}{K^{#1}}}\xspace}
\providecommand{\EdgeSet}[2][]{\ensuremath{%
\ifthenelse{\equal{#1}{}}{\mathcal{S}^{(#2)}}{\mathcal{S}^{#1,#2}} }\xspace} 
\providecommand{\EdgeSetSign}[2][]{\ensuremath{%
\ifthenelse{\equal{#1}{}}{\sigma^{(#2)}}{\sigma^{#1,#2}} }\xspace} 
\providecommand{\EdgeSetMatrix}[2][]{\ensuremath{%
\ifthenelse{\equal{#1}{}}{M^{(#2)}}{M^{#1,#2}}}\xspace} 
\providecommand{\EdgeSetEntry}[4][]{\ensuremath{%
\ifthenelse{\equal{#1}{}}{m^{(#2)}_{#3,#4}}{m^{#1,#2}_{#3,#4}} }\xspace} 
\providecommand{\EdgeSetCost}[2][]{\ensuremath{%
\ifthenelse{\equal{#1}{}}{\gamma^{(#2)}}{\gamma^{#1,#2}}}\xspace} 
\newcommand{\GFUNC}[1][]{\ensuremath{%
    \ifthenelse{\equal{#1}{}}{g_i}{g_{#1}}}\xspace}
\newcommand{\GFunc}[2][]{\ensuremath{\GFUNC[#1](#2)}\xspace}
\providecommand{\AltUt}[3][]{\ensuremath{%
\ifthenelse{\equal{#1}{}}{U_{#2}(#3)}{U^{(#1)}_{#2}(#3)}}\xspace}
\providecommand{\ALTM}{\ensuremath{A}\xspace} 
\providecommand{\ALTMIN}{\ensuremath{A^{\text{in}}}\xspace} 
\providecommand{\ALT}{\ensuremath{a}\xspace}
\providecommand{\Alt}[3][]{\ensuremath{%
\ifthenelse{\equal{#1}{}}{\ALT_{#2,#3}}{\ALT^{(#1)}_{#2,#3}}}\xspace}
\providecommand{\AltIn}[2]{\ensuremath{\ALT^{\text{in}}_{#1,#2}}\xspace}
\providecommand{\BNPG}[1]{\ensuremath{%
\ifthenelse{\equal{#1}{}}{\textsc{BNPG}}{\textsc{BNPG}(#1)}
}\xspace}
\providecommand{\ALLPSNE}[1]{\ensuremath{\mathcal{E}(#1)}\xspace}
\providecommand{\MODV}{\ensuremath{\bm{v}}\xspace}
\providecommand{\MOD}[1]{\ensuremath{v_{#1}}\xspace}
\providecommand{\ANM}{ANM\xspace}
\providecommand{\GraphIn}{\ensuremath{G^{\text{in}}}\xspace}
\providecommand{\GraphOut}{\ensuremath{G}\xspace}
\providecommand{\EdgeIn}{\ensuremath{E^{\text{in}}}\xspace}
\providecommand{\EdgeOut}{\ensuremath{E}\xspace}
\providecommand{\NDSSGraphIn}{\ensuremath{\hat{G}^{\text{in}}}\xspace}
\providecommand{\NDSSEdgeIn}{\ensuremath{\hat{E}^{\text{in}}}\xspace}
\providecommand{\NDSSGraphOut}{\ensuremath{\hat{G}}\xspace}
\providecommand{\NDSSEdgeOut}{\ensuremath{\hat{E}}\xspace}
\providecommand{\EdgeChange}{\ensuremath{\Gamma}\xspace}
\providecommand{\NDSSEdgeChange}{\ensuremath{\hat{\Gamma}}\xspace}
\providecommand{\Diff}[2]{\ensuremath{ \Delta_{#1}^{#2} }\xspace} 
\providecommand{\MarginB}{\ensuremath{ \Delta  }\xspace} 
\providecommand{\SetCard}[1]{\ensuremath{| #1 |}\xspace}
\providecommand{\SET}[1]{\ensuremath{\{ #1 \}}\xspace}
\providecommand{\Set}[2]{\ensuremath{\SET{#1 \mid #2}}\xspace}
\providecommand{\Ceiling}[1]{\ensuremath{\lceil {#1} \rceil}\xspace}
\providecommand{\Floor}[1]{\ensuremath{\lfloor {#1} \rfloor}\xspace}
\DeclareSymbolFont{AMSb}{U}{msb}{m}{n}
\DeclareMathSymbol{\N}{\mathord}{AMSb}{"4E}
\DeclareMathSymbol{\B}{\mathord}{AMSb}{"42}
\DeclareMathSymbol{\Z}{\mathord}{AMSb}{"5A}
\DeclareMathSymbol{\R}{\mathord}{AMSb}{"52}
\providecommand{\Omit}[1]{{}}
\newtheorem{theorem}{Theorem}[section]
\newtheorem{lemma}[theorem]{Lemma}
\newtheorem{proposition}[theorem]{Proposition}
\newtheorem{definition}{Definition}[section]
\theoremstyle{remark}
\newcommand{\LPlabel}{}
\newenvironment{LP}[3][]{%
\renewcommand{\LPlabel}{#1}
\ifthenelse{\equal{\LPlabel}{}}{%
\[ \begin{array}{l}
\mbox{#2} \;\;#3 \;\; \mbox{subject to} \\
          \begin{array}[t]{ll}%
}{%
\begin{equation} \begin{array}{l}
\mbox{#2} \;\;#3 \;\; \mbox{subject to} \\
          \begin{array}[t]{ll}%
}}{%
\ifthenelse{\equal{\LPlabel}{}}{%
\end{array} \end{array} \]}{%
\end{array} \end{array} \label{\LPlabel} \end{equation}}%
}
\title{IJCAI--21 Formatting Instructions}
\author{
  Sixie Yu$^1$
  \and
  David Kempe$^2$
  \and
  Yevgeniy Vorobeychik$^{1}$
  
  \affiliations
  $^1$Washington University in St. Louis\\
  $^2$University of Southern California\\

  \emails
  $^1$\{sixie.yu,yvorobeychik\}@wustl.edu,
  $^2$david.m.kempe@gmail.com
}
\begin{document}
\title{Altruism Design in Networked Public Goods Games}
\maketitle

\begin{abstract}
Many collective decision-making settings feature a strategic tension
between agents acting out of individual self-interest and promoting a common good.
These include wearing face masks during a pandemic, voting, and vaccination.
Networked public goods games 
capture this tension, with networks encoding strategic interdependence among agents.
Conventional models of public goods games posit solely individual self-interest as a motivation, even though altruistic
motivations have long been known to play a significant role in agents' decisions.
We introduce a novel extension of public goods games to account for
altruistic motivations by adding a term in the utility function that
incorporates the perceived benefits an agent obtains from the welfare
of others, mediated by an altruism graph.
Most importantly, we view altruism not as immutable, but rather as a lever for promoting the common good.
Our central algorithmic question then revolves around the
computational complexity of modifying the altruism network to achieve desired public goods game investment profiles.
We first show that the problem can be solved using linear programming
when a principal can fractionally modify the altruism network.
While the problem becomes in general intractable if the principal's
actions are all-or-nothing, we exhibit several tractable special cases.
\end{abstract}

\section{Introduction}\label{sec:intro}
Individuals in a collective decision-making environment often experience the following type of scenario.
Each individual can decide whether or how much effort to invest for the common good; 
many others may benefit from the efforts, 
but the cost of the investment is incurred by the individual.
Examples of such scenarios include decisions whether or not to wear a mask in a pandemic, vaccinate, or invest in security.
The outcomes of such scenarios are often highly suboptimal from a societal point of view: mask-wearing suggestions are flaunted, societies remain undervaccinated, and security measures are not taken.
At the heart of this breakdown is that while individuals are ``connected'' in the sense that their actions affect one another, they are often disconnected ``socially,'' in the sense that they do not experience the utility gain/loss of those affected by their actions.
In economic terms, actions have \emph{externalities} on other players, which are, by definition, not \emph{internalized}. 

Indeed, the outcomes of such scenarios tend to be significantly different when the individuals form a more tightly knit community. Within families, groups of friends, or small villages, individuals frequently take actions, at a cost to themselves, which primarily benefit others. Similarly, societies with a stronger sense of ``duty'' towards fellow citizens tend to witness more compliance in all of the above-mentioned examples.
Not surprisingly then, campaigns to encourage individual effort (e.g., ``Wear a mask --- save a life!'') tend to appeal to notions of altruism and duty, attempting to get individuals to internalize some of their externalities, if only psychologically.

If the goal of campaigns is to encourage altruistic behavior, an important question is what type of campaign is most effective.
Should a principal, aiming to achieve a societally desirable outcome, try to appeal to a generic sense of ``duty towards your fellow citizens,'' try to strengthen the social ties within a small neighborhood, or focus on building a few strong ties between some key individuals?
Can the question of how best to \emph{build} or \emph{change} altruism in a society be approached algorithmically, and are the resulting questions tractable or intractable?
This is the high-level question we investigate in the present paper.


The question of how to build altruism networks is meaningful in a variety of strategic settings.
We focus on
\emph{networked public goods games}~\citep{bramoulle:kranton:public-goods,bramoulle:kranton:damours:strategic,PublicGoods,Yu20}, motivated by the real-world scenarios discussed earlier (e.g., encouraging mask wearing).
In networked public goods games, the benefits of an individual's effort are reaped by those with whom the individual interacts, encoded by a network on the individuals.\footnote{Public goods games can be viewed as the special case in which the network is complete.}
Specifically,
an individual's utility depends on 1) her own investment decision, and 2) the \emph{aggregate} investment from her direct neighbors in the network.

In most conventional models of games, including public goods games, it is assumed that agents are driven solely by their individual interests.
This assumption is nearly always violated in behavioral studies of public goods games~\citep{ledyard:public-goods,levine:altruism-spitefulness}.
While there are many different ways to model altruistic behavior, one natural way was proposed by \citet{ledyard:public-goods}: the utility of a player $i$ is a linear combination of an \emph{egocentric utility} term, which is the direct benefit to $i$, and an \emph{altruistic} term, which is a sum of egocentric utilities of other players $j$, weighted by the strength \Alt{i}{j} of altruism that $i$ feels for $j$.
In most prior work of this kind, \Alt{i}{j} was modeled as a constant for all players $i$ and $j$.
A more general variation by \citet{meier:oswald:schmid:wattenhofer} considered an altruism network in vaccination games, but assumed that the altruism graph is identical with the graph representing strategic dependence, as well as that altruism weights are identical for all edges.
Naturally, many settings call for more fine-grained models in which the weights can be different: for example, parents typically care more about their children's welfare than that of strangers.

While the focus of past work on altruism in games has been on its equilibrium effects, 
our point of departure is to consider the altruism network itself as (partially) under the control of a principal.
In other words, we view altruistic motivations as a lever that can be adjusted to promote the common good, for example, through public outreach campaigns, community meetings, and personal introductions.
Specifically, we propose a model of modifying altruism networks, with the goal of inducing a target investment profile by the agents.
We consider three variants of the altruism network: weighted, directed, and undirected.
We show that even for very complex available actions, the problem can be solved efficiently using linear programming when the principal has fine-grained control over the extent to which actions are taken.
When the principal can only control \emph{which} actions are taken,
the problem becomes NP-complete, even when each action affects only a single edge in the altruism network.
However, when the altruism network is directed, we show that the problem is tractable in a broad array of special cases by reductions to the (tractable cases of) the \textsc{Knapsack} problem.
We also leverage this connection to exhibit an FPTAS for the general case.
When the altruism network is undirected, the hardness results apply even for much more restrictive special cases.
However, we show that the problem is tractable when the benefits from investment are linear and uniform, by a non-trivial reduction to the problem of \textsc{network design for degree sets (NDDS)} introduced by \citep{kempe2020inducing}, who also showed that it can be solved in polynomial time.

Our problem of designing an altruism graph to achieve target equilibrium outcomes is, indeed, conceptually related to \citet{kempe2020inducing}, who study the problem of designing the \emph{strategic} network in networked public goods games.
The main rationale for shifting focus to designing \emph{altruism graphs} is that strategic networks are often difficult to change.
For example, in a pandemic, it is difficult to directly affect contacts among individuals, as these are ultimately the products of individual choices (e.g., even lockdowns may be ineffective if individuals are non-compliant, except through levels of enforcement that are often viewed as unacceptable by the population).
In contrast, it can be significantly easier to try to impact decisions indirectly by evoking altruistic motivations in people.
From a technical perspective, the problem of altruism design impacts utilities linearly, in contrast to the design of strategic networks; however, it is also distinct from the linear special case in \citet{kempe2020inducing}, where the marginal impact of each neighbor on a player's utility is identical, in contrast to altruism design, where these differ.
  
\paragraph{Related Work }
Our work is related to four lines of research: graphical games, altruism modeling, mechanism and market design, and network design.
Graphical games encode sparsity in the interdependence of player utility functions using a graph~\citep{kearns2013graphical,shoham2008multiagent}, with networked public goods games an important class of such models~\citep{bramoulle:kranton:public-goods,galeotti2010network,grossklags2008security,Yu20}.
A conventional assumption in such games is that agents act to exclusively promote their own interest.
However, considerable experimental evidence exists that even games with this payoff structure elicit altruistic motivations among human subjects~\citep{dong2016dynamics,levine:altruism-spitefulness}.
This, in turn, led to a series of approaches to model altruism in a variety of games, including public goods games, which are of direct interest here~\citep{ledyard:public-goods,dong2016dynamics}, inoculation games~\citep{meier:oswald:schmid:wattenhofer}, routing games~\citep{AltruisticRouting}, and congestion games~\citep{AtomicGames,AltruismPoA}.

A typical way that altruism is captured in prior literature is by either adding a social welfare term to utility functions~\citep{ledyard:public-goods}, or introducing a parameter that governs the extent to which agents care about their social network neighbors~\citep{meier:oswald:schmid:wattenhofer}.
Our model is distinct in that it allows altruism to be relationship-dependent, a property we model by an altruism network.
Moreover, our goal is to \emph{modify} an altruism network to achieve a target equilibrium (e.g., one that maximizes social welfare).

\syreplace{
	Mechanism and market design (e.g., \citep{nisan:roughgarden:tardos:vazirani}) also aim to change the parameters of a game to induce  desirable equilibrium outcomes.
	There are various ways in which the game's parameters are changed; key approaches include the design of market structure (e.g., matching market mechanism~\citep{Haeringer18}), payment rules as in traditional mechanism design~\citep{nisan:roughgarden:tardos:vazirani}, or the structure of information available to the players~\citep{dughmi:information-structure}.
	We introduce altruism network design as a novel lever for aligning incentives with public good.
}{
	Mechanism and market design also aim to change the parameters of a game to induce  desirable equilibrium outcomes~\citep{nisan:roughgarden:tardos:vazirani,Haeringer18,dughmi:information-structure}.
	We introduce altruism network design as a novel lever for aligning incentives with public good.
}

\syreplace{
	The last relevant line of research is network design.
 	A particularly related thread in network design is to study the effects of network modification on equilibrium outcomes or welfare.
	\citet{kempe2020inducing} initiated an algorithmic study of target network modification with the goal of inducing  equilibria of a particular form, but focused on the network representing strategic dependencies, rather than altruism.
	In particular, as discussed above, \citet{kempe2020inducing} design the network that impacts strategic interdependence; in contrast, we design the network that impacts altruism portion of player utilities.
	The crucial difference is in the impact of network design on the utilities: in \citep{kempe2020inducing}, it is mediated by the non-linear benefits term that only depends on the number of investors, whereas in our work, the impact is linear, but depends on the identities of the altruism graph neighbors.

	Other studies in this line of research include the following: \citet{bramoulle:kranton:public-goods} considered the addition of a single edge to the underlying network of a public goods game and studied  how the addition affects welfare.
	\citet{galeotti2010network} also studied the effects of modifying the underlying network of a public goods game on equilibrium behavior and welfare. 
}{
	The last relevant line of research is network design.
	A particularly related thread in network design is to study the effects of network modification on equilibrium outcomes or welfare~\citep{kempe2020inducing,bramoulle:kranton:public-goods,galeotti2010network}.
	Another related thread is to alter a network in order to effect a variety of different outcomes for different types of games~\citep{Sheldon10,CTPEFF:edge-manipulation,ghosh:boyd:growing,TPEFF:gelling,bredereck2017manipulating,sina2015adapting,matteo2020manipulating,amelkin2019fighting,garimella:morales:gionis:mathioudakis}.
}

\section{Networked Public Goods Games and Altruism}\label{sec:prelim}
We study altruism in \emph{binary networked public goods games (BNPGs)}, which are an important variant of \emph{public goods games} studied extensively in prior literature~\citep{bramoulle:kranton:public-goods,galeotti2010network,grossklags2008security,suri2011cooperation,dong2016dynamics,kempe2020inducing,Yu20}.
We begin by formally describing BNPGs and pure strategy Nash equilibria, the solution concept we focus on.
We then discuss a natural model of altruism in games, and its application to the specific case of BNPGs.

\noindent{\bf Binary Networked Public Goods Games: }
A \emph{binary networked public goods} (BNPG) game is characterized by the following:
    \begin{enumerate}
        \item A simple, undirected, and loop-free graph $H=(V, E_H)$ in which the nodes $V = \SET{1, 2, \ldots, n}$ are the agents/players, and the edges $E_H$ represent the interdependencies among the players' payoffs. 
        \item A binary strategy space $\SET{0,1}$ for each player $i$.
\footnote{\dkedit{The public goods game literature extensively considers both binary and continuous decisions, and both are natural candidates for considering network design. Here, we focus on binary strategies due to their applicability to decisions such as vaccinations or mask wearing.}}
        We interpret the choice of strategy 1 as \emph{investing} in a public good, while choosing 0 is interpreted as non-investment.
        The action of player $i$ is denoted by \IS{i}, and the joint pure strategy profile of all players by $\ISV = (\IS{1}, \IS{2}, \ldots, \IS{m})$. We use $\ISV_{-i}$ to denote a strategy profile that omits player $i$'s strategy.
        \item A non-decreasing utility function $U_i(\IS{i},\ISV_{\Neigh[H]{i}})$ for each player $i$, where $\Neigh[H]{i} = \Set{j}{ (i, j) \in E_H }$ is the set of $i$'s neighbors in the graph $H$.
    \end{enumerate}

    As is common in the literature on public goods games~\citep{bramoulle:kranton:public-goods}, we assume that each player's (egocentric) utility function $U_i$ only depends on the total investment by $i$'s network neighbors.
    To formalize this, we define $\NeInv[H,\ISV]{i} = \sum_{j \in \Neigh[H]{i}} \IS{j}$ as the number of $i$'s neighbors who invest under \ISV.
    We omit $H$, $\ISV$, or both from this notation when they are clear from the context.
Each player $i$'s utility function then has the following form:
  \begin{small}
    \begin{align}
    \label{eq:utility-original}
      U_i (\ISV) & = U_i(\IS{i}, \NeInv[\ISV]{i})
      \; = \; \GFunc[i]{\IS{i}, \NeInv[\ISV]{i}} - \InvCost{i} \IS{i}.
    \end{align}
  \end{small}

The second term ($-\InvCost{i} \IS{i}$) captures the cost incurred by player $i$ from investing.
As is standard in the public goods games literature, each \GFUNC[i] is assumed to be a non-negative and non-decreasing function of both of its arguments, capturing the positive externality that $i$ experiences from her neighbors' (and her own) investments.
Observe that each function \GFUNC[i] can be represented using $O(n)$ values, so the entire BNPG game (including the graph structure) can be represented using $O(n^2)$ values.

We will consider \emph{pure-strategy Nash equilibria (PSNE)} of BNPGs.
A pure strategy profile $\ISV^*$ is a PSNE if for all $i$, $U_i(x_i, \NeInv[H,\ISV]{i} ) \ge U_i(1-x_i, \NeInv[H,\ISV]{i} )$.
We write \ALLPSNE{\mathcal{G}} for the set of all PSNEs of the game $\mathcal{G}$.

\noindent{\bf Altruistic Motivations in BNPGs: }
A natural way to model other-regarding utilities is to define a player's utility as a linear combination of her
egocentric utility, defined by Equation~\eqref{eq:utility-original}, and the egocentric utilities of other players.

To formalize this, we can think of the matrix $\ALTM = (\Alt{i}{j})_{i,j}$ as encoding an \emph{altruism network}.
This captures the central motivation of our work, discussed in the introduction: that the agents who are \emph{affected} by the actions of $i$ may not be the same as the agents that $i$ \emph{cares about}.%
\footnote{In the context of vaccination games, \citet{meier:oswald:schmid:wattenhofer} study the special case in which the friendship network is the \emph{same} as the network of who may transmit a disease to whom.}
The resulting utility function of a player $i$ in our BNPG game model with altruism is
  \begin{small}
    \begin{align}
    \label{eq:utility}
      \AltUt[\ALTM]{i}{\ISV}
      & = \GFunc[i]{\IS{i}, \NeInv{i}} - \InvCost{i} \IS{i}
            +  \sum_{j \in \Neigh[H]{i}} \Alt{i}{j} \GFunc[j]{\IS{j}, \NeInv{j}}.
    \end{align}
  \end{small}
  \vspace{-0.1in}

We denote the BNPG with altruism network \ALTM by \BNPG{\ALTM}.

We note two points about this model:
First, the altruistic term of player $i$'s utility does not include a term for the \emph{investment cost} of player $j$, only the \emph{utility}.
This is inconsequential, as investment decisions by $j$ are not under $i$'s control, so from $i$'s point of view, these cost terms are constants.
Second, the other-regarding terms have no component in which $j$'s utility due to $i$'s egocentric payoff is recursively considered. Such utilities may be harder to observe, and from a modeling perspective, they can be transformed
into the case we study here; see~\citep{bergstrom:benevolent}.

In some of the results in later sections, we will specifically want to stress the \emph{network} aspect of altruism.
In that case, we will assume that we are given a (directed or undirected) \emph{altruism graph} $G$, and that $\Alt{i}{i} = 1$ for all $i$, $\Alt{i}{j} = \ALT$ for all $i,j$ for which $G$ contains the edge $(i,j)$, and $\Alt{i}{j} = 0$ for all $i,j$ for which $G$ contains no edge. In other words, the altruism strength of the edges of $G$ is uniform.
When $G$ is undirected, we will refer to the case as \emph{symmetric altruism}; when $G$ is directed, we call it \emph{asymmetric altruism}.

\section{Modifying Altruism Networks}\label{sec:model}
As discussed in the introduction, a major problem with public goods situations is that equilibria can be far from optimal because individuals may not fully internalize the impact of their actions on others.
Therefore, a principal who seeks to steer the network to a better equilibrium might wish for agents to consider others in their decisions.
We consider situations in which the principal can increase or decrease the salience of others in these settings, for example, through introductions, advertising, or community meetings.\footnote{This is separate from, and in addition to, other channels, such as rewarding or punishing certain actions.}
We now formally model this problem as modifying an altruism network to achieve socially desired outcomes.
Our model is as follows.
The principal aims to induce a particular \emph{target investment profile} \EISV.
This allows us to cleanly capture a broad variety of design goals, such as maximizing welfare or achieving fairness, while focusing on the computational issues at the core of our specific problem of altruism design.
To induce the outcome, the principal wants to (minimally) modify the altruism network \ALTMIN to \ALTM such that \EISV is a PSNE of the modified game \BNPG{\ALTM}.\footnote{There may be other PSNE of the game. We implicitly assume that the principal can \emph{suggest} an equilibrium to the agents, who will follow the suggestion unless it is in their best interest to deviate.}

As implied by the preceding discussions, the principal may have at his disposal a number of different actions, affecting the altruism between different sets of pairs of agents, in positive or negative ways. For example, a general appeal to watch out for one another may lead to a small increase in altruism between many pairs of individuals; a community meeting may lead to a stronger increase among a smaller subset, and a personal introduction may introduce one strong edge.
We model such settings by assuming that there are
\NUM \emph{actions}, with \NUM polynomial in $n$.
Each action $k$ has associated with it a set \EdgeSet{k} of affected altruism edges,
a cost $\EdgeSetCost{k} \geq 0$,
and a \emph{sign} $\EdgeSetSign{k} \in \SET{-1,1}$, which captures whether the action strengthens or weakens the edges in \EdgeSet{k}.%
\footnote{Typically, a principal would be more likely to want to \emph{strengthen} the altruism network. However, one can easily imagine situations and construct instances in which a weakening of the network is necessary. We therefore aim for more generality in our model.}
The edge set is encoded in the corresponding adjacency matrix $\EdgeSetMatrix{k}$, with entries $\EdgeSetEntry{k}{i}{j}$ which are 1 for $(i,j) \in \EdgeSet{k}$ and 0 otherwise.
The costs \EdgeSetCost{k} measure the monetary expense or effort/time needed to implement the corresponding activity, per unit of change in the altruism.
The principal aims to solve the following problem:

\smallskip
\begin{definition}[Altruism Network Modifications (\ANM)]
\label{def:main-problem-functions}
\textbf{Given:} an altruism network
\ALTMIN, target investment profile \EISV,
and actions $\SET{\EdgeSet{1}, \ldots, \EdgeSet{\NUM}}$ with signs \EdgeSetSign{k} and costs \EdgeSetCost{k}.
\textbf{Goal:} choose a non-negative vector $\MODV \in \R_{\geq 0}^{\NUM}$ of minimum total cost
$\sum_{k=1}^{\NUM} \MOD{k} \EdgeSetCost{k}$
such that the modified game with new altruism network
  \vspace{-0.1in}
  \begin{small}
    \begin{align}
      \ALTM & = \ALTMIN + \sum_{k=1}^{\NUM} \MOD{k} \cdot \EdgeSetSign{k} \cdot \EdgeSetMatrix{k}
    \label{eqn:modified-altruism}
    \end{align}
  \end{small}          
has \EISV as a PSNE, i.e., $\EISV \in \ALLPSNE{\BNPG{\ALTM}}$.
\end{definition}
Here, \MODV is a vector capturing how much the principal spends on each of the available actions. We assume that the different actions are cumulative in their effects on any of the network's edges, and (partially) cancel out when they have opposite signs.
Note that we could additionally truncate the entries of \ALTM so that $0 \le \ALTM \le 1$; this is not consequential for our results, and we proceed with the slightly cleaner model above.
The principal's spending on actions results in a modified altruism network, and his goal is to ensure that the target action profile \EISV becomes one of the equilibria of the modified game.
The principal wants to achieve this goal at minimum cost (which is infinite if the problem is infeasible).%
Since strategies are binary, a target profile \EISV can be equivalently represented by the set of agents who invest under this profile, $\InvPlayer=\InvPlayer_{\EISV}=\Set{i \in V}{\EIS{i} = 1}$.
Whether or not players invest can be completely characterized using a collection of inequalities.
In these inequalities, what ultimately determines the decision is the agent's marginal value from investing.
This marginal value has two elements: first, the agent's own marginal benefit, $\Delta_{x_i} \GFunc[i]{n_i} := \GFunc[i]{1, n_i} - \GFunc[i]{0, n_i}$, and second, the marginal benefit that $i$ can obtain from altruism towards $j$, which is $\Diff{j}{-} := \GFunc[j]{x_j, n_j} - \GFunc[j]{x_j, n_j-1}$ for agents $i$ who invest, and $\Diff{j}{+} :=\GFunc[j]{x_j, n_j+1} - \GFunc[j]{x_j, n_j}$ for those agents $i$ who do not.
We further define $\Threshold{i} := c_i - \Delta_{x_i}\GFunc[i]{n_i}$ for each agent $i$.
We then obtain the following equivalent characterization of a PSNE of \BNPG{\ALTM}, in which a set \InvPlayer of players invest, and the rest do not:
\vspace{-0.1in}
  \begin{small}
    \begin{equation} 
    \label{E:psne}
      \begin{aligned}
        \sum_{j \in \Neigh[H]{i}} \Alt{i}{j} \Diff{j}{-}
          & \geq \Threshold{i} \quad \text{ if } i \in \InvPlayer \\
        \sum_{j \in \Neigh[H]{i}} \Alt{i}{j}  \Diff{j}{+}
          & \leq \Threshold{i} \quad \text{ if } i \notin \InvPlayer,
      \end{aligned}
    \end{equation}
  \end{small}

with \Alt{i}{j} the strength of $i$'s altruism for $j$ under \ALTM.

The marginal benefit functions $\Delta \GFUNC:=(\Diff{i}{-}, \Diff{i}{+})$ are important parameters in our model; they can be restricted by putting limitations on \GFUNC, e.g., $\Delta \GFUNC$ is bounded by a polynomial in $n$ iff \GFUNC itself is.
We consider three possible restrictions:


\begin{description}
\item[general $\GFUNC$:]  $\Diff{i}{-}$ and $\Diff{i}{+}$ are arbitrary.
    \item[polynomial $\GFUNC$:] When the \GFUNC are bounded by a polynomial in $n$, $\Diff{i}{-}$ and $\Diff{i}{+}$ are also bounded by a polynomial in $n$. 
    \item[uniform linear and separable $\GFUNC$:]
      When all \GFUNC[i] are of the form $\GFunc[i]{\IS{i}, n_i} = h_i(\IS{i}) + \MarginB \cdot n_i$ for some possibly idiosyncratic function $h_i$ and some (common) constant $\MarginB$, then
        $\Diff{i}{-} = \Diff{i}{+} = \MarginB$ for all $i$ and all values of $n_i$.
\end{description}



\section{An LP for Fractional Modifications}\label{sec:weighted}



We begin by considering the variant of the problem in which the principal can spend fractionally on each action, i.e., $\MODV \in \R_{\geq 0}^{\NUM}$.
In that case, the principal's optimal strategy can be found using a straightforward linear program.
The decision variables are simply the principal's investments $\MOD{k} \geq 0$.
For ease of notation, we also define variables \Alt{i}{j} for the altruism from $i$ towards $j$ resulting from the modifications.
Given the available actions and costs, the relationship between \Alt{i}{j} and \MOD{k} is exactly characterized by Equation~\eqref{eqn:modified-altruism}, which is linear in the \MOD{k}.
Next, notice that given the BNPG and the desired PSNE \EISV, the set \InvPlayer and all relevant constants in Equation~\eqref{E:psne} (that is $\Diff{i}{-}, \Diff{i}{+}, \Threshold{i}$) can be immediately computed.
Therefore, we obtain the following linear program for finding the optimal
spending strategy for the principal:
	\vspace{-0.03in}
	\begin{small}
		\begin{LP}[eq:weighted-LP]{Minimize}{\sum_{k=1}^{\NUM} \MOD{k} \cdot \EdgeSetCost{k}}
		\Alt{i}{j} = \AltIn{i}{j} + \sum_{k=1}^{\NUM} \MOD{k} \cdot \EdgeSetSign{k} \cdot \EdgeSetEntry{k}{i}{j}
		    & \text{ for all } i, j\\
		\sum_{j \in \Neigh[H]{i}} \Alt{i}{j} \Diff{j}{-} \geq \Threshold{i}
		    & \text{ for all } i \in \InvPlayer\\
		\sum_{j \in \Neigh[H]{i}} \Alt{i}{j} \Diff{j}{+} \leq \Threshold{i}
		    & \text{ for all } i \notin \InvPlayer\\
		\MOD{k} \geq 0 & \text{ for all } k.
		\end{LP}
	\end{small}


Notice that we could fairly straightforwardly generalize the LP to
even deal with more general actions, namely, allowing an action $k$ to
affect different edges in different ways by allowing
\EdgeSetMatrix{k} to have entries that are not all equal.

\section{Graph-Based Modifications}\label{sec:unweighted}

\Omit{
\begin{table*}[htb]
\small
\centering
\begin{tabular}{@{}ccccc@{}}
\toprule
                                                          &                             & general $\GFUNC$ & polynomial $ \GFUNC$ & uniform $ \GFUNC$ \\ \midrule
\multicolumn{1}{c|}{\multirow{2}{*}{asymmetric altruism}} & general \EdgeSetCost{k}      & hard                    & poly                 & poly                    \\
\multicolumn{1}{c|}{}                                     & polynomial \EdgeSetCost{k}      & poly                    & poly                 & poly
                  \\ \midrule
\multicolumn{1}{c|}{\multirow{2}{*}{symmetric altruism}}  & general \EdgeSetCost{k} & hard                    & hard                 & poly                  \\
\multicolumn{1}{c|}{}                                     & polynomial \EdgeSetCost{k} & hard                    & hard                 & poly
                    \\ \bottomrule
\end{tabular}
\caption{An overview of our results for unweighted altruism network design, where \EdgeSetCost{k} represents edge cost.}
\label{tab:complexity}
\end{table*}
}

While there are contexts in which a principal can precisely control the amount of effort invested in different actions, there are many others in which actions take more of an all-or-nothing nature.
Indeed, many activities that increase altruism through making harms to others more salient, such as community meetings or public health advertising campaigns (e.g., to promote mask-wearing), are naturally discrete (a 1-second ad is not very effective) and, thus, have a discrete impact on the altruism graph.
This motivates the special case in which all entries of \MODV must be binary, corresponding to decisions whether or not the principal will add/remove the edge sets \EdgeSet{k}.
As we show presently that even if we restrict \EdgeSet{k} to affect a single edge the problem now becomes NP-hard, we devote the sequel to this special case, and seek to identify what additional structure is sufficient to make the problem, or its approximation, tractable.


When \EdgeSet{k} is a singleton $(i,j)$, 
we can think of the input as a graph $\GraphIn = (V, \EdgeIn)$ (instead of a weighted network). There is a cost $\EdgeSetCost{i,j}$ associated with each (directed or undirected) node pair $(i,j)$. If $(i,j) \in \EdgeIn$, then this is the cost for removing $(i,j)$; otherwise, it is the cost of adding $(i,j)$. Thus, implicitly, $\EdgeSetSign{i,j} = 1$ if $(i,j) \notin \EdgeIn$, and $\EdgeSetSign{i,j} = -1$ if $(i,j) \in \EdgeIn$.
  Adding/removing edges results in a new altruism network $\GraphOut = (V, \EdgeOut)$. 
  All off-diagonal non-zero entries of the altruism network \ALTM then have the same altruism value \ALT, while all diagonal entries are set to 1.

We study two variants of this problem: 1) asymmetric altruism, that
is, when the altruism graph is directed, and 2) symmetric altruism,
where it is undirected.
Typically, both will capture some important aspect of the real world: while altruism often aligns with actual social or kinship ties, it can also result from a general sense of responsibility or goodwill, which may not be reciprocated.


\subsection{Asymmetric Altruism}

We begin by formally showing that in this setting, \ANM is in general
intractable, even in the special case where we can only add or remove
individual edges, rather than subsets of edges.

\begin{theorem}\label{th:hard-directed}
  \ANM with asymmetric altruism is NP-complete even when:
  \begin{enumerate}
  \item the sets \EdgeSet{k} are singletons and can only be added, i.e., $\EdgeSetSign{k} = 1$ for all $k$,
  \item the initial altruism network is empty, and
  \item the target profile \EISV has all agents investing.
  \end{enumerate}
\end{theorem}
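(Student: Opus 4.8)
The plan is to establish NP-completeness by proving membership in NP and then hardness via a reduction from \textsc{Subset-Sum}. Membership is routine: a certificate is the binary spending vector $\MODV \in \{0,1\}^{\NUM}$; from it we compute the modified network \ALTM through Equation~\eqref{eqn:modified-altruism}, compute every $\Diff{j}{-}$ and $\Threshold{i}$ from the $O(n)$-size descriptions of the \GFUNC, and then verify both the budget and the inequalities~\eqref{E:psne}. All arithmetic is polynomial in the binary input size, so $\ANM \in \mathrm{NP}$.

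The key structural observation is that the three restrictions collapse \ANM into a family of \emph{decoupled} per-agent covering problems. Since \EISV has everybody investing, $\InvPlayer = V$ and only the first family of constraints in~\eqref{E:psne} is active: for every $i$ we need $\sum_{j \in \Neigh[H]{i}} \Alt{i}{j}\Diff{j}{-} \geq \Threshold{i}$. Because the initial network is empty and edges can only be added, each directed weight \Alt{i}{j} is either $0$ or \ALT and is toggled by the single action that adds $(i,j)$. Crucially, as altruism is directed, \Alt{i}{j} appears \emph{only} in agent $i$'s own inequality, so the constraints separate across agents. For a fixed $i$, the task is therefore a minimum-cost covering knapsack: choose a cheapest set of outgoing edges $(i,j)$ so that \ALT times the sum of the selected values $\Diff{j}{-}$ is at least \Threshold{i}, where edge $(i,j)$ has ``value'' $\Diff{j}{-}$ and ``cost'' $\EdgeSetCost{i,j}$.

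The reduction realizes a single such covering knapsack. Given a \textsc{Subset-Sum} instance with positive integers $s_1,\dots,s_m$ and target $T$, I take as the strategic graph $H$ the star \GraphIn with center $r$ and leaves $1,\dots,m$, set $\ALT = 1$, and choose the (general) \GFUNC so that $n_\ell = 1$ with $\Diff{\ell}{-} = s_\ell$ for each leaf and $n_r = m$ with $\Threshold{r} = T$ for the center. The addition cost of edge $(r,\ell)$ is $\EdgeSetCost{r,\ell} = s_\ell$, and the budget is $T$. Each leaf is given a non-positive threshold $\Threshold{\ell} \le 0$ (by choosing its investment cost $c_\ell$ and $g_\ell$ so that its own marginal benefit exceeds its cost), so its investing condition holds with no outgoing edges; hence the only nontrivial constraint is the center's. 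For any set $S$ of added edges $(r,\ell)$, the center's constraint reads $\sum_{\ell \in S} s_\ell \ge T$ while the incurred cost is $\sum_{\ell \in S} s_\ell$. Any edge other than an $(r,\ell)$ either fails to appear in an active constraint (leaf–leaf pairs are non-adjacent in $H$) or only aids an already-satisfied leaf constraint, so no optimal solution adds it. Consequently a solution of cost at most the budget $T$ exists iff some $S$ satisfies $\sum_{\ell \in S} s_\ell \ge T$ and $\sum_{\ell \in S} s_\ell \le T$, i.e. $\sum_{\ell \in S} s_\ell = T$, which is exactly a yes-instance of \textsc{Subset-Sum}.

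I expect the main obstacle to be conceptual rather than computational: one must recognize that the combination of ``all agents investing'' with ``directed, additive-from-empty'' is precisely what decouples the equilibrium conditions into one covering problem per agent, and then see that genuine (not merely pseudo-polynomial) hardness demands the regime of \emph{general} \GFUNC \emph{together with} general costs, so that both the values $\Diff{\ell}{-}$ and the costs $\EdgeSetCost{r,\ell}$ may be exponentially large in binary. This is exactly the setting in which covering knapsack / \textsc{Subset-Sum} is NP-hard, and it is consistent with the polynomial-time cases obtainable when either the \GFUNC or the costs are polynomially bounded. The remaining work, namely confirming that the leaf agents satisfy their own PSNE conditions ``for free'' and that off-star edges are never worth buying, is a matter of discharging routine details.
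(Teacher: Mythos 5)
Your proof is correct and follows essentially the same strategy as the paper's: both reduce from a knapsack-type problem by creating one distinguished agent whose investment constraint becomes a covering knapsack over the outgoing altruism edges that can be added, while all other agents' constraints are satisfied trivially via zero (or non-positive) thresholds. The only differences are cosmetic --- you use \textsc{Subset-Sum} on a star graph where the paper uses \textsc{Knapsack} on a clique with infinite costs on the irrelevant edges.
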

The proof is a direct reduction from the \textsc{Knapsack} problem and given in the Supplementary Material.

Next, we show that under mild additional assumptions, the problem becomes efficiently solvable.
The key observation that enables our positive results is that for directed graphs, adding or removing an edge $\DiEdge{i}{j}$ only affects agent $i$'s altruistic behavior.
This allows us to decompose the problem into $n$ independent subproblems only connected through a common budget constraint.
Each subproblem can be naturally modeled as a \textsc{Knapsack} problem, and so long as the \textsc{Knapsack} problems are individually solvable, so is the overall problem.
More specifically, we distinguish two cases, based on whether $i \in \InvPlayer$.
\begin{itemize}
\item If $i \in \InvPlayer$, then the altruism edges originating with $i$ must ensure that
$\sum_{j \in \Neigh[H]{i}} \Alt{i}{j} \Diff{j}{-} \geq \Threshold{i}$.
Let $\phi_i = \sum_{j \in \Neigh[H]{i}} \AltIn{i}{j} \Diff{j}{-}$.
If $\phi_i \geq \Threshold{i}$, then $i$ will invest even without adding any edges.
Otherwise, the principal will need to add edges out of $i$ adding a total altruism term of $\Threshold{i} - \phi_i$; also, the principal will never want to remove edges out of $i$.
Adding a directed edge \DiEdge{i}{j} can be thought of as putting an item with value $\ALT \cdot \Diff{j}{-}$ and cost/weight \EdgeSetCost{i, j} into a knapsack.
Thus, the set of ``items'' available to add to agent $i$ is $\Pack{i} = \Set{j \in \Neigh[H]{i}}{\AltIn{i}{j} = 0}$.
The subproblem is then to select items from \Pack{i} such that the total value is at least $\Threshold{i} - \phi_i$ while the total weight is minimized. 
\smallskip
\item Similarly, if $i \notin \InvPlayer$, then the altruism edges originating with $i$ must ensure that
$\sum_{j \in \Neigh[H]{i}} \Alt{i}{j} \Diff{j}{+} \leq \Threshold{i}$.
Let $\phi_i = \sum_{j \in \Neigh[H]{i}} \AltIn{i}{j} \Diff{j}{+}$.
Analogously to the previous case, the principal now wants to \emph{remove} edges such that the altruism is reduced by at least $\phi_i - \Threshold{i}$ (unless $\phi_i \leq \Threshold{i}$, in which case nothing needs to be done).
Again, this problem can be modeled as a \textsc{Knapsack} problem.
The set of items available is $\Pack{i}=\Set{j \in \Neigh[H]{i}}{\AltIn{i}{j} = 1}$.
The directed edge \DiEdge{i}{j} is modeled as an item with value $\ALT \cdot \Diff{j}{+}$ and cost/weight \EdgeSetCost{i, j}.
The goal is to minimize the total weight subject to achieving total value at least $\phi_i - \Threshold{i}$.
\end{itemize}

It is well known \citep{kleinberg:tardos:algorithm-design,vazirani:approximation-algorithms} that the \textsc{Knapsack} problem can be solved in polynomial time using Dynamic Programming when either the weights or the values are bounded by a polynomial in the number of items. Using standard rounding/scaling techniques, this approach also yields an FPTAS.
By leveraging these algorithms, we obtain the corresponding results for our problem.

\noindent{\bf General $\GFUNC$ and Polynomial Edge Costs }
When edge costs are bounded by a polynomial in $n$, which corresponds to polynomially bounded item weights in the \textsc{Knapsack} instances, which are therefore polynomial-time solvable with Dynamic Programming (DP).
Applying DP for each agent $i$ separately then yields a minimum-cost overall solution. We obtain the following:

\begin{proposition}
   Under asymmetric altruism, the problem \ANM with general $\GFUNC$ and polynomially bounded edge costs is polynomial-time solvable.
\end{proposition}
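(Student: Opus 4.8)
The plan is to make rigorous the decomposition sketched above. First I would verify the central structural claim: under asymmetric (directed) altruism, the modified matrix \ALTM enters the PSNE characterization~\eqref{E:psne} only through its rows, so the single constraint governing agent $i$ depends exclusively on the entries $\Alt{i}{j}$, i.e., on the presence or absence of the directed edges $\DiEdge{i}{j}$. Since in the singleton setting each action toggles exactly one such entry, the decision variables $\MOD{i,j}$ partition across agents: the variables controlling agent $i$'s constraint appear in no other agent's constraint. Because \ANM asks to minimize the \emph{total} cost $\sum_{(i,j)} \MOD{i,j}\,\EdgeSetCost{i,j}$ with no shared budget cap, the objective is separable, and an optimal global solution is obtained by solving the $n$ per-agent problems independently and summing their costs (reporting infeasibility if any one of them is infeasible).

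Next I would formalize each per-agent problem as an instance of \textsc{Knapsack} exactly as in the two bulleted cases. For $i \in \InvPlayer$ the requirement $\sum_{j} \Alt{i}{j}\Diff{j}{-} \ge \Threshold{i}$ becomes: choose a subset of the addable edges $\Pack{i} = \Set{j \in \Neigh[H]{i}}{\AltIn{i}{j}=0}$, each edge $\DiEdge{i}{j}$ carrying value $\ALT\cdot\Diff{j}{-}$ and weight $\EdgeSetCost{i,j}$, so that the accumulated value is at least $\Threshold{i} - \phi_i$ (where $\phi_i = \sum_{j}\AltIn{i}{j}\Diff{j}{-}$) at minimum total weight; the symmetric case $i \notin \InvPlayer$ removes edges from $\Pack{i} = \Set{j \in \Neigh[H]{i}}{\AltIn{i}{j}=1}$ to shed at least $\phi_i - \Threshold{i}$ of value. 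I would check faithfulness: because \GFUNC is non-decreasing, each $\Diff{j}{-}$ and $\Diff{j}{+}$ is non-negative, so adding value can only help an investor's constraint and removing value only helps a non-investor's; hence it is never beneficial to, respectively, remove or add an edge, and a min-weight cover meeting the margin is exactly a feasible minimum-cost modification for agent $i$. Feasibility of the subproblem reduces to checking whether the total available value suffices to meet the margin.

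Finally I would invoke the standard dynamic program for the min-weight-cover version of \textsc{Knapsack}. When the edge costs \EdgeSetCost{i,j} are bounded by a polynomial in $n$, the item weights are polynomially bounded, so tabulating, for each total weight $w$ from $0$ up to the (polynomial) sum of weights, the maximum attainable value $V_i[w]$ and reading off the smallest $w$ with $V_i[w]$ at least the required margin solves agent $i$'s problem in time polynomial in $n$; running this for all $n$ agents and summing yields the overall optimum in polynomial time. I do not anticipate a genuine obstacle, since the heavy lifting is the decomposition itself; the only points requiring care are (i) arguing the separability of the objective cleanly (no coupling budget across agents), and (ii) using the weight-indexed rather than value-indexed DP, which is the correct choice precisely because it is the \emph{weights} (the edge costs), not the values $\ALT\cdot\Diff{j}{-}$ and $\ALT\cdot\Diff{j}{+}$, that are assumed polynomially bounded here.
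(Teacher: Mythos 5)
Your proposal is correct and takes essentially the same approach as the paper: decompose by agent (each agent's PSNE constraint under directed altruism depends only on its own out-edges), cast each per-agent problem as a minimum-weight-cover \textsc{Knapsack} instance, and solve it with the weight-indexed dynamic program, which is polynomial because the edge costs (item weights) are polynomially bounded. Your added checks on separability of the objective and non-negativity of the marginal values are consistent with, and slightly more explicit than, the paper's argument.
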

    

\noindent{\bf Polynomial $\GFUNC$ and General Edge Costs }
When all the \GFUNC are polynomially bounded, so are their differences, and hence the (scaled) item values \Diff{j}{+} and \Diff{j}{-}.
Hence all the values of the ``items'' are polynomially bounded.
Tractability then follows from the fact that the \textsc{Knapsack} problem is polynomial-time solvable (using Dynamic Programming) when item values are bounded by some polynomial in the length of the input.
Again, this allows an algorithm to solve each subproblem in polynomial time, and then aggregate the optimal solutions. 
This gives rise to the following proposition:
\smallskip
\begin{proposition} \label{prop:directed-bounded-poly}
Under asymmetric altruism, the problem \ANM with polynomially bounded $\GFUNC$ and general edge costs is polynomial-time solvable.
\end{proposition}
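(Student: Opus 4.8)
The plan is to follow the per-agent decomposition described above and reduce the proposition to the value-bounded case of \textsc{Knapsack}. Because the altruism graph is directed, the edges emanating from an agent $i$ influence only $i$'s own equilibrium constraint in \eqref{E:psne}; hence the decision of which outgoing edges to add or remove at $i$ can be made independently for each $i$, and the total modification cost is the sum of the per-agent costs. I would therefore argue that an optimal global solution is obtained by solving, for each agent $i$, the single-agent subproblem described above to optimality and taking the union of the chosen edge modifications. The overall instance is feasible precisely when every subproblem is feasible; if some agent cannot meet its required altruism target even after using all of its available items, the whole instance is infeasible and the optimum cost is $\infty$.

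It remains to show that each subproblem is solvable in polynomial time under the stated hypotheses. Each subproblem is a covering (minimum-weight) variant of \textsc{Knapsack}: for an investing agent $i \in \InvPlayer$ we must pick items $j \in \Pack{i}$ of value $\ALT \cdot \Diff{j}{-}$ and weight $\EdgeSetCost{i,j}$ so that the total value is at least $\Threshold{i} - \phi_i$, minimizing total weight; for a non-investing agent the analogous statement holds with values $\ALT \cdot \Diff{j}{+}$ and target $\phi_i - \Threshold{i}$. Since all \GFUNC are polynomially bounded, so are their marginal differences $\Diff{j}{-}$ and $\Diff{j}{+}$, and hence (after scaling by the fixed constant $\ALT$) all item values are polynomially bounded. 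I would then invoke the value-indexed dynamic program: let $\text{minwt}[v]$ be the minimum total weight of a subset of items achieving total value exactly $v$, computed by processing the items of $\Pack{i}$ one at a time. The total achievable value is at most $\sum_{j \in \Pack{i}} \ALT \cdot \Diff{j}{\pm}$, a sum of at most $n$ polynomially bounded terms, so the table has polynomial size and the DP runs in polynomial time. The optimal subproblem cost is then $\min_{v \geq T} \text{minwt}[v]$, where $T$ is the agent's required target. This is exactly the value-bounded \textsc{Knapsack} algorithm cited above.

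The main point requiring care is that the dynamic program indexes states by \emph{value} rather than weight (the weights/costs being arbitrary here), which is legitimate only if the values are polynomially bounded integers, and that the relevant variant is the covering (minimum-weight) form rather than the textbook packing form. I would address the former by noting that $\ALT$ is a fixed rational and the $\Diff{j}{\pm}$ are polynomially bounded, so after clearing the (constant) denominator of $\ALT$ all item values become polynomially bounded integers, while the target $T$ need not be integral since it enters only through the final comparison $v \geq T$. The remaining verifications --- that the union of the per-agent optima is itself feasible for \eqref{E:psne} and globally cost-minimal --- are immediate from the additivity of the objective and the disjointness of the outgoing-edge decision variables across agents, and are routine.
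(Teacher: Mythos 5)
Your proposal matches the paper's argument: the same per-agent decomposition (outgoing edges of $i$ affect only $i$'s constraint in Equation~\eqref{E:psne}, so the objective is additive across agents) followed by the value-indexed dynamic program for the covering variant of \textsc{Knapsack}, which runs in polynomial time because the $\Diff{j}{\pm}$ are polynomially bounded. You spell out the covering-versus-packing distinction and the scaling by $\ALT$ more explicitly than the paper does, but the approach is essentially identical and correct.
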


\noindent{\bf An FPTAS for the general case }
Finally, we can leverage the standard FPTAS for \textsc{Knapsack}
to obtain an FPTAS for \ANM with general asymmetric altruism.
Specifically, given a parameter $\epsilon$, one can run the FPTAS with that parameter for each of the subproblems/agents $i$ separately.
The result for each $i$ will be a set of edges to add/remove such that $i$ invests iff $\EIS{i} = 1$, and the total cost of the modifications is within a factor  $(1+\epsilon)$ of optimal. Adding all of these costs shows that the overall cost is within a factor $(1+\epsilon)$ of optimal.
We obtain the following proposition:


\begin{proposition} \label{prop:approx-directed}
   Under asymmetric altruism, consider \ANM with general $\GFUNC$ and general edge costs.
   Given $\epsilon > 0$, the optimal cost $B^*$ can be approximated arbitrarily well, i.e., a solution of cost $B \leq (1+\epsilon)B^*$ can be found, with an algorithm which runs in time polynomial in $n$ and $1/\epsilon$.
\end{proposition}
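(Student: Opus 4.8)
The plan is to exploit the decomposition already set up for asymmetric altruism: because a directed edge $\DiEdge{i}{j}$ enters only agent $i$'s incentive constraint in Equation~\eqref{E:psne}, an instance of \ANM splits into $n$ subproblems, one per agent, that share no modifiable edges and interact only through the additive cost objective. Each subproblem is a \emph{covering} (minimization) \textsc{Knapsack} over the candidate set $\Pack{i}$ (edges to add when $i \in \InvPlayer$, existing edges to remove when $i \notin \InvPlayer$): pick a subset of minimum total cost (weights $\EdgeSetCost{i,j}$) whose total altruism value (item values $\ALT \cdot \Diff{j}{-}$ for $i \in \InvPlayer$, or $\ALT \cdot \Diff{j}{+}$ for $i \notin \InvPlayer$) attains the required slack $T_i$, namely $\Threshold{i} - \phi_i$ when $i \in \InvPlayer$ and $\phi_i - \Threshold{i}$ otherwise; if the slack is non-positive the subproblem is solved by the empty set. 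First I would fix $\epsilon > 0$ and run an FPTAS for this covering \textsc{Knapsack} on each agent $i$ separately, obtaining a feasible edge set of cost $B_i \le (1+\epsilon) B_i^\ast$, where $B_i^\ast$ is the optimum of subproblem $i$.

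The combination step is where independence does the work. Since the out-edges of distinct agents are disjoint and each affects only its own agent's constraint, a global modification induces \EISV as a PSNE exactly when its restriction to each agent's out-edges is feasible for that agent's subproblem, and its cost is the sum of the per-agent costs. Hence the global optimum decomposes additively as $B^\ast = \sum_{i} B_i^\ast$. Taking the union of the $n$ edge sets returned by the per-agent FPTAS therefore yields a globally feasible modification of total cost $B = \sum_i B_i \le (1+\epsilon)\sum_i B_i^\ast = (1+\epsilon) B^\ast$, and feasibility of the union is immediate because every constraint in Equation~\eqref{E:psne} is satisfied by construction, so \EISV is a PSNE of the modified game.

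I expect the main obstacle to be the per-agent FPTAS itself, for a reason specific to our setting: the subproblems are the covering variant of \textsc{Knapsack}, and the value constraint must hold \emph{exactly} --- achieving altruism value only $(1-\epsilon) T_i$ could let agent $i$ profitably deviate, so \EISV would fail to be a PSNE --- while only the cost may be approximated. This rules out the textbook value-scaling scheme, which blurs the threshold $T_i$, and instead calls for scaling the \emph{costs}. Concretely, I would first compute a constant-factor estimate $\tilde{B} \ge B_i^\ast$ (e.g.\ greedily), discard every candidate edge of cost exceeding $\tilde{B}$ (none can lie in an optimal solution), round the remaining costs up to a grid of width $K = \Theta(\epsilon \tilde{B} / \Abs{\Pack{i}})$, and run a DP indexed by the now polynomially bounded total scaled cost that records the maximum \emph{exact} altruism value attainable at each cost level; the output is the cheapest level whose attainable value reaches $T_i$. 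Rounding costs up keeps the reported subset exactly feasible, while the grid inflates its cost by at most a $(1+\epsilon)$ factor. Each DP runs in time polynomial in \Abs{\Pack{i}} and $1/\epsilon$, and summing over the $n$ agents yields total running time polynomial in $n$ and $1/\epsilon$, as required.
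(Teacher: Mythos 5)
Your proposal is correct and follows essentially the same route as the paper: decompose the asymmetric instance into $n$ independent per-agent \textsc{Knapsack} subproblems (independence holding because a directed edge $\DiEdge{i}{j}$ enters only $i$'s constraint in Equation~\eqref{E:psne}), run an FPTAS on each, and sum the costs, using additivity of both the objective and the optimum across agents to get the global $(1+\epsilon)$ guarantee. The one place you go beyond the paper is worth noting: the paper simply invokes ``the standard FPTAS for \textsc{Knapsack},'' whereas you correctly observe that the subproblems are the \emph{covering} (min-cost) variant in which the value threshold $T_i$ must be met exactly --- otherwise \EISV fails to be a PSNE --- so the usual value-scaling scheme is inapplicable and one must instead scale the costs and run a DP over cost levels that tracks exact attainable value. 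That cost-scaling FPTAS for min-\textsc{Knapsack} is standard and your sketch of it (constant-factor cost estimate, pruning, grid of width $\Theta(\epsilon\tilde{B}/\Abs{\Pack{i}})$) is sound, so your writeup is, if anything, a more careful version of the paper's argument rather than a different one.
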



\subsection{Symmetric Altruism}

Next, we turn to the setting in which altruism is reciprocal or symmetric: when an edge is added to the altruism network, it affects \emph{both} incident agents. While many graph-theoretic questions are easier for undirected graphs than for directed ones, the \ANM problem becomes harder. Intuitively, the reason is that while adding the edge $(i,j)$ is very beneficial for $i$, it may be less so for $j$; given the choice, adding a different edge out of $j$ may be preferable.
Under the symmetric altruism model, the principal does not have the fine-grained control of adding different edges out of $i$ and $j$, and might have to ``waste'' one direction of the edge. The resulting ``side-effects'' of desirable edges must be more globally balanced. Indeed, we show that even special cases that are polynomial-time solvable in the asymmetric model become NP-hard in the symmetric model.
\smallskip
\begin{theorem}\label{th:hard-undirected}
  \ANM with symmetric altruism is NP-complete even when
  \begin{itemize}
  \item all the sets \EdgeSet{k} are singletons,
  \item all agents invest under the target equilibrium \EISV,
  \item all $\GFUNC$ are polynomially bounded,
  \item all edge costs \EdgeSetCost{k} are 1, and
  \item the graph $H$ is a clique.
  \end{itemize}
\end{theorem}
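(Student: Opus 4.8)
The plan is to prove NP-completeness in two parts: (i) membership in NP, and (ii) a polynomial reduction from \textsc{3-Partition}, which is \emph{strongly} NP-complete and therefore stays hard when all numbers are polynomially bounded---exactly the regime forced by the ``polynomially bounded \GFUNC'' hypothesis. Membership in NP is immediate: a certificate is the set of modified (here, added) edges, which has size $O(n^2)$; since every \GFUNC[i] is polynomially bounded, all quantities \Diff{j}{-}, \Threshold{i} and the sums in~\eqref{E:psne} are computable in polynomial time, so a guessed solution is checkable against the PSNE inequalities efficiently.

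For hardness, I start from a \textsc{3-Partition} instance: integers $a_1,\dots,a_{3m}$ with $\sum_\ell a_\ell = mB$ and $B/4 < a_\ell < B/2$, asking whether they split into $m$ triples each summing to $B$. I build an \ANM instance whose strategic graph $H$ is the clique $K_{4m}$ (so every pair is an $H$-neighbor and any altruism edge counts), with empty initial altruism network, target profile \EISV\ in which all agents invest, and all edge costs equal to $1$ (so cost $=$ number of added edges). There are $3m$ \emph{item} agents $v_\ell$ and $m$ \emph{group} agents $u_g$. Using the freedom to pick each \GFUNC[i] (linear in \NeInv{i} with a vertex-specific slope) and cost \InvCost{i}, I set marginals and thresholds so that, after dividing out the uniform altruism strength \ALT, the condition~\eqref{E:psne} for an investing agent $i$ reads $\sum_{j \text{ adj } i} w_j \ge b_i$, with neighbor weights $w_{v_\ell} = a_\ell$, $w_{u_g} = M := mB+1$, and demands $b_{v_\ell} = M$, $b_{u_g} = B$. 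All values are polynomially bounded because \textsc{3-Partition} is strongly NP-hard, so every \GFUNC[i] is polynomially bounded as required.

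Correctness is a combined edge-counting and weight-counting argument showing the instance has a solution of cost $\le 3m$ iff the \textsc{3-Partition} instance is a yes-instance. Since $M$ exceeds the total item weight $\sum_\ell a_\ell = mB$, an item agent can meet its demand $M$ only through at least one group neighbor; item--item edges are useless. Hence covering all $3m$ items needs at least $3m$ item--group edges, and since these are mandatory, any additional group--group or duplicate edge pushes the cost above $3m$. Thus every cost-$3m$ solution is exactly $3m$ item--group edges with each item incident to exactly one group. Such a solution delivers total weight $\sum_\ell a_\ell = mB$ to the groups; as each of the $m$ groups needs $\ge B$ and the total is exactly $mB$, every group receives exactly $B$, and $B/4<a_\ell<B/2$ forces each group to be matched to exactly three items---precisely a valid $3$-partition. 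Conversely, any $3$-partition gives a feasible cost-$3m$ solution by joining each item to its assigned group. All five restrictions in the statement hold by construction.

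The step I expect to be the main obstacle is engineering the gadget so that the \emph{undirected} nature of the edges is exactly what creates the hardness, i.e., where the asymmetric tractability breaks down. The delicate choices are the large weight $M$, which simultaneously makes one group neighbor sufficient for an item and renders item--item edges worthless (forcing the bipartite item-to-group assignment), and the reliance on the $B/4<a_\ell<B/2$ promise to convert the weight-counting equality ``each group receives exactly $B$'' into the combinatorial statement ``each group gets exactly three items.'' Verifying that no cheaper solution can exploit group--group edges or shared items---that the minimum is \emph{exactly} $3m$ and attained only by partition-like solutions---is the crux; NP membership, realizing the target marginals via polynomially bounded \GFUNC[i], and meeting the clique/unit-cost/all-invest restrictions are routine.
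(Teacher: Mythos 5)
Your proof is correct and follows essentially the same route as the paper's: a reduction from strongly NP-complete \textsc{3-Partition} with $3m$ item agents and $m$ group agents, an all-invest target on a clique $H$, unit edge costs, a budget of $3m$, and the same combined edge-counting and weight-counting argument (each item must attach to exactly one group, the $m$ group demands sum to the total item weight, and the $B/4<a_\ell<B/2$ promise converts equal sums into triples). The only difference is the gadget forcing each item onto a group: the paper starts from a complete initial altruism graph on the item agents and leaves each item short by a tiny $\epsilon$ that only a group agent (with marginal $\epsilon$) can supply, whereas you start from an empty network and give group agents a large marginal $M>\sum_\ell a_\ell$; both devices are valid and yield the same conclusion.
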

The proofs of this and the remaining results are in the Supplementary Material.
Recall that for the asymmetric case, even just one of uniform (or even just polynomially bounded) edge costs and polynomially bounded $\GFUNC$ is enough to obtain tractability.

\dkedit{
  For the remainder of this section, we will focus on the special case when all agents invest under the target equilibrium \EISV. This is because for more general target equilibria, even deciding if there exists \emph{any} altruism graph yielding this equilibrium is NP-complete. In other words, even the decision problem of a principal with infinite budget is NP-complete.

\begin{theorem}\label{th:fea-hard-undirected}
  \ANM with symmetric altruism for arbitrary target equilibria \EISV is NP-complete even when
  \begin{itemize}
  \item all the sets \EdgeSet{k} are singletons,
  \item all $\GFUNC$ are polynomially bounded,
  \item all edge costs \EdgeSetCost{k} are 0 (i.e., the principal's budget is infinite),
  \item the graph $H$ is a clique.
  \end{itemize}
\end{theorem}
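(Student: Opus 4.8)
The plan is to first establish membership in NP and then prove hardness by a reduction from \textsc{3-Partition}, which is \emph{strongly} NP-complete and therefore remains hard even when all the input integers are bounded by a polynomial in the instance size; this is precisely what lets the reduction respect the ``polynomially bounded \GFUNC'' restriction (in contrast to the \textsc{Knapsack}-based reduction used for the asymmetric case, which inherently needs large numbers). Membership in NP is immediate: a certificate is the undirected altruism graph $G$ (equivalently, the binary vector \MODV). Since $H$ is a clique, each neighbor count $n_j$, and hence each \Diff{j}{-}, \Diff{j}{+}, and \Threshold{j}, is determined by \EISV and computable in polynomial time, so the PSNE inequalities~\eqref{E:psne} can be verified directly.

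For hardness, I would start from a \textsc{3-Partition} instance with $3m$ positive integers $a_1,\dots,a_{3m}$ satisfying $\sum_k a_k = mB$ and $B/4 < a_k < B/2$, and build an instance on a clique $H$ with two groups of vertices: $3m$ ``item'' vertices and $m$ ``bin'' vertices. In the target profile \EISV, all item vertices invest and all bin vertices do not. I would then choose the \GFUNC (and costs \InvCost{i}) so that, at the arguments dictated by \EISV, the marginals are $\Diff{k}{+} = a_k$ and $\Diff{k}{-} = 0$ for each item $k$, and $\Diff{b}{-} = 1$ and $\Diff{b}{+} = 0$ for each bin $b$; the thresholds \Threshold{k} are set (folding in the altruism strength \ALT) so that each investing item needs at least one bin neighbor, and \Threshold{b} so that each bin has effective capacity exactly $B$. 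Because all $a_k$ and $B$ are polynomially bounded by strong NP-completeness, the resulting \GFUNC are non-negative, non-decreasing, and polynomially bounded, as required.

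The crux is a double-counting step linking feasibility to an exact partition. Under these marginals, edges among items contribute $0$ to the item (lower-bound) constraints and edges among bins contribute $0$ to the bin (upper-bound) constraints, so both are inert; only item--bin edges matter. An item's investing inequality then forces it to be adjacent to at least one bin, while a bin's non-investing inequality forces the total weight $\sum_{k \in N_G(b)} a_k$ of its item-neighbors to be at most $B$. Summing the bin inequalities bounds the total absorbed item weight by $mB$, while summing the item inequalities forces it to be at least $\sum_k a_k = mB$; hence every item is adjacent to \emph{exactly} one bin and every bin is filled to exactly $B$. Using $B/4 < a_k < B/2$, an exactly-full bin holds exactly three items, so a feasible altruism graph exists if and only if the \textsc{3-Partition} instance is a yes-instance.

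I expect the main obstacle to be the bookkeeping needed to realize the prescribed marginal-benefit values by legitimate \GFUNC and costs that are simultaneously non-negative, non-decreasing, and polynomially bounded at the specific neighbor-count arguments induced by the clique $H$ and the profile \EISV, while rigorously certifying that the extraneous item--item and bin--bin edges cannot be exploited to manufacture spurious feasible solutions. The partition-style counting argument itself is clean; the care lies in constructing the instance so that these side-channels are provably inert and every inequality in~\eqref{E:psne} is accounted for.
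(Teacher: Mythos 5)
Your proposal is correct and follows essentially the same route as the paper: a reduction from the strongly NP-complete \textsc{3-Partition} problem using item and bin vertices on a clique $H$ with an empty initial altruism network, a double-counting step that plays the bins' capacity constraints (summing to $mB$) against the items' coverage constraints (also summing to $mB$) to force every item onto exactly one bin and every bin to be filled exactly, and the $B/4 < a_k < B/2$ bounds to force triples. The only difference is that you swap the roles --- your items invest and your bins abstain, whereas the paper has the bin vertices $V_2$ invest and the item vertices $V_1$ abstain --- which dualizes each inequality and makes the item--item and bin--bin side-channel edges contribute literally zero (slightly simplifying the paper's device of padding each bin's lower-bound threshold by $m-1$ via bin--bin edges); the prescribed marginals are realizable by non-negative, non-decreasing, polynomially bounded \GFUNC exactly as you anticipate.
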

}

\dkedit{Theorem~\ref{th:fea-hard-undirected} implies that no approximation guarantee can be attained in polynomial time for \ANM when \EISV is an arbitrary action profile.
However, we remark that when all agents invest under \EISV, there is a straightforward polynomial-time $(2+\epsilon)$-approximation algorithm: the algorithm applies the FPTAS to the asymmetric version and adds a reciprocal edge whenever a directed edge is added. This leads to a blowup of a factor 2 compared to the FPTAS achieved in the asymmetric setting.
}

\smallskip
\noindent{\bf Uniform Separable Linear Utility Functions }
When the agents have utility functions \GFUNC that are separable in the arguments, and linear with common slope in the second argument (we term these uniform separable and linear, or \textsc{USL}), the marginal benefits are uniform and equal to a constant \MarginB, i.e., $\Diff{i}{-}=\Diff{i}{+}=\MarginB$ for all $i$ and all values of $n_i$.
Such utility functions are commonly studied in public goods games~\citep{suri2011cooperation}.
We show that when the goal is for all players to invest, for \textsc{USL} utility functions and when all sets \EdgeSet{k} are single edges, the problem becomes tractable, even with all other parameters (edge costs, network structure, etc.) being fully general.

    \begin{theorem}\label{th:tract}
      \ANM with symmetric altruism is polynomial time solvable when
        \begin{itemize}
        \item the utility functions \GFUNC are USL,
        \item the sets \EdgeSet{k} are singleton,
        \item the target equilibrium \EISV has all agents investing.
        \end{itemize}
    \end{theorem}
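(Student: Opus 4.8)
The plan is to reduce this special case to the \textsc{NDDS} problem of \citet{kempe2020inducing}, exploiting the fact that under \textsc{USL} utilities the PSNE conditions collapse to pure degree constraints on the altruism graph. First I would specialize the characterization~\eqref{E:psne} to the present setting. Since every agent invests, only the ``$\geq$'' constraints are active, and because the utilities are \textsc{USL} we have $\Diff{j}{-} = \MarginB$ for every $j$ and every value of $n_j$. With uniform altruism strength \ALT on the undirected altruism graph, $\Alt{i}{j}$ equals \ALT on each altruism edge incident to $i$ and $0$ otherwise, so the constraint for agent $i$ becomes $\MarginB \cdot \ALT \cdot d_G(i) \geq \Threshold{i}$, where $d_G(i)$ counts the altruism neighbors of $i$ that lie in $\Neigh[H]{i}$ (the only $j$ appearing in the sum). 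Setting $\tau_i := \Ceiling{\Threshold{i}/(\MarginB \ALT)}$ (and $\tau_i := 0$ when $\Threshold{i} \le 0$), the whole system of PSNE conditions is equivalent to requiring that the final altruism graph satisfy $d_G(i) \geq \tau_i$ for all $i$.

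Second, I would argue that the instance is genuinely a degree-set design problem on a subgraph of $H$. An altruism edge $(i,j)$ with $j \notin \Neigh[H]{i}$ never enters any player's utility, so adding such an edge only wastes budget and removing an initial one only wastes budget; hence we may restrict all modifications to edges of $H$. Moreover, because every active constraint is a \emph{lower} bound and $\MarginB, \ALT > 0$, adding an edge can only help satisfy constraints while removing one can only hurt, so an optimal solution never deletes an edge and the problem is monotone. The essential difficulty --- and the reason the symmetric case is harder than the asymmetric one --- is that adding the single edge $(i,j)$ simultaneously raises $d_G(i)$ and $d_G(j)$ by one, coupling the two endpoints and destroying the per-agent decomposition that was available in the asymmetric case.

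Third, I would encode this as an instance of \textsc{NDDS}: the ground graph is $(V, E_H)$ with initial edges $\EdgeIn \cap E_H$, the per-edge modification costs are the given $\EdgeSetCost{i,j}$, and each node $i$ receives the feasible degree set $S_i = \SET{\tau_i, \tau_i+1, \ldots, \Abs{\Neigh[H]{i}}}$. Any modification lands every node's degree in its set exactly when the resulting altruism graph satisfies all the lower bounds $d_G(i) \geq \tau_i$, so minimum-cost \textsc{NDDS} solutions correspond precisely to minimum-cost solutions of \ANM; infeasibility (some $\tau_i > \Abs{\Neigh[H]{i}}$) surfaces as an empty or unreachable degree set. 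Invoking the polynomial-time algorithm for \textsc{NDDS} from \citet{kempe2020inducing} then finishes the proof.

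The step I expect to be the main obstacle is verifying that the degree sets $S_i$ are of a form \textsc{NDDS} can handle in polynomial time, since general degree-set (general-factor) problems are NP-hard and tractability hinges on the sets being gap-free intervals. This is exactly what \textsc{USL} buys: a single constant marginal benefit \MarginB turns each agent's requirement into one clean threshold, hence the contiguous interval $[\tau_i, \Abs{\Neigh[H]{i}}]$, rather than the gap-riddled sets that would arise for general \GFUNC (and which underlie the hardness in Theorem~\ref{th:hard-undirected}). I would therefore take care to confirm that the $S_i$ produced are intervals lying within the tractable regime for \textsc{NDDS}, and to dispatch the degenerate cases $\MarginB = 0$ or $\Threshold{i} \le 0$ (which reduce to pure feasibility checks) separately.
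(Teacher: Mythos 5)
Your proposal is correct and follows essentially the same route as the paper: under \textsc{USL} utilities the PSNE conditions reduce to lower-bound degree constraints on the altruism graph restricted to $H$-neighbors, and the problem is then encoded as an \textsc{NDDS} instance with interval degree sets $[\Ceiling{\Threshold{i}/(\ALT\MarginB)}, \cdot]$ and infinite cost on node pairs outside $E_H$, after which the polynomial-time algorithm of \citet{kempe2020inducing} applies. The only cosmetic differences are that the paper writes the upper end of each degree interval as $n-1$ rather than $\Abs{\Neigh[H]{i}}$ (equivalent, since non-$H$ edges are priced out) and carries out the correspondence for both invested and non-invested agents, whereas you add the (valid but unneeded) observation that no optimal solution deletes edges.
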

  The proof of Theorem~\ref{th:tract} (given in the Supplementary Material) is through a non-trivial reduction to the \textsc{Min-Cost Perfect Matching} problem via a connection to another related problem: \textsc{Network Design for Degree Sets} (NDDS)~\citep{kempe2020inducing}.

\section{Conclusion}\label{sec:conclusion}
We consider how to change altruistic behavior of individuals so as to induce societally desirable outcomes.
One major contribution of our work is to separately capture the strategic interdependencies and the altruistic behaviors of individuals. 
We propose a model of modifying the altruism network, with the goal of inducing a target investment profile by the individuals.
A series of corresponding algorithmic results are exhibited, including hardness results even in very restrictive scenarios (e.g., each modification only affects a single edge), and tractability results in a broad array of special cases.


Our work only focused on the goal of achieving a \emph{specific} strategy profile \EISV. 
Other natural goals would be to maximize the social welfare of the resulting profile (e.g., subject to a budget constraint), or to maximize the number of investing players. In these cases, many of our current reductions will fail, since they rely on knowing exactly which set of players will invest.
More fundamentally, the simultaneous study of externality and altruism networks raises interesting structural questions. 
Is there a sense in which for some broad class of games (such as public goods games), it is desirable for these networks to be aligned? In other words, should individuals care most about those who are most affected by their actions? Or can it be beneficial to have ``long-range'' edges?
Additionally, our model of altruism assumes that edges can be added/removed by the principal, so long as the principal is willing to pay for the action. A natural alternative would be to recognize that the total ``capacity'' of an agent for altruism towards others may be bounded. In other words, the principal may be able to introduce new altruism edges, but the more outgoing edges agent $i$ has, the weaker each of them becomes. 
This change will require different algorithmic and structural insights.


\begin{small}
\balance
\bibliographystyle{named}
\bibliography{names,conferences,bibliography,publications,main}
\end{small}

\newpage
\appendix
\section*{Supplementary Material}
\section{Proof of Theorem~\ref{th:hard-directed}}
   We consider the decision version of \ANM with an input parameter $B$, the available budget.
   The goal is to decide if the total cost of producing a graph \GraphOut with $\EISV \in \ALLPSNE{\BNPG{\GraphOut}}$ exceeds $B$.
   It is clear that the problem is in NP: given a candidate set of actions, their costs can be added, and one can verify that the desired equilibrium \EISV is indeed a PSNE.

   To prove NP-harness, we reduce from the \textsc{Knapsack} problem.
   Recall that in the \textsc{Knapsack} problem, we are given a knapsack with capacity $W$, a set of $n$ items, and a value $V$.
   Each item has a given weight $w_i \ge 0$ and a value $v_i \ge 0$. 
   The goal is to decide if there is a subset $S$ of items, such that the total weight $\sum_{i \in S} w_i \leq W$ and
   the total value $\sum_{i \in S} v_i \geq C$.

   Given an instance of the \textsc{Knapsack} problem, we construct an instance of \ANM as follows.
   The target investment profile \EISV is that all players invest.
   The altruism constant $\ALT > 0$ is arbitrary (e.g., $\ALT = 1$).
   The graph $H=(V, E_H)$ is a clique with $n+1$ nodes; a special node $u_c$ together with $n$ nodes $u_1, \ldots, u_n$.
   The initial altruism network \GraphIn is empty, i.e., $\AltIn{u_i}{u_j} = 0$ for all $i \neq j$.
   For the node $u_c$, let $\Threshold{u_c}=\ALT \cdot C$ and $\Diff{u_c}{-}$ be an arbitrary nonnegative number.
   For each node $u_i$, let $\Threshold{u_i}=0$ and $\Diff{u_i}{-} = v_i$.
   Intuitively, the node $u_i$ corresponds to the item with value $v_i$.

   To realize the above, the function \GFUNC[u_c] is constant over the whole domain, i.e., $\Delta_{x_{u_c}} \GFUNC[u_c] = 0$ and $\Diff{u_c}{-}=0$. The cost $\InvCost{u_c}= \ALT \cdot C$. This ensures that $\Threshold{u_c}=\ALT \cdot C$.
   For a node $u_i$, we let \GFUNC[u_i] be a linear function over the first argument with slope \InvCost{u_i}, such that $\Delta_{x_i}\GFUNC[x_i]=\InvCost{u_i}$.
   The function \GFUNC[u_i] over the second argument is also a linear function with slope $v_i$, i.e., $\Diff{u_i}{-}=v_i$.

   The cost to add an edge from $u_c$ to $u_i$ is $w_i$.
   Neither edges from $u_i$ to $u_c$ nor edges between different $u_i$ can be added, i.e., their costs are infinite.
   The budget is the weight capacity, $B = W$.
   The reduction clearly can be constructed in polynomial time.
   
   First, suppose that there is a set $S$ of items that solves the \textsc{Knapsack} problem.
   Then, picking the actions/edges $(u_c, u_i)$ for $i \in S$ does not exceed the budget $B$.
   Each node $u_i$ will invest as $\Threshold{u_i}=0$ and $\sum_{j \in \Neigh[H]{u_i}} \Alt{i}{j} \Diff{j}{-} = 0 \geq \Threshold{u_i}$.
   The node $u_c$ will not deviate from investing since
   $\sum_{u_i \in \Neigh[H]{u_c}} \Alt{u_c}{u_i} \Diff{u_i}{-} = \ALT \cdot \sum_{i \in S} v_i \geq \ALT \cdot C = \Threshold{u_c}$.

   For the converse direction, suppose that there is a subset $S$ of indices $i$ such that the actions/edges $(u_c, u_i)$ for $i \in S$ have total cost at most $B=W$, and such that adding the edges $(u_c, u_i)$ makes \EISV a PSNE. 
   Because $u_c$ has no incentive to deviate, $\sum_{i \in S} v_i \geq C$.
   Thus, the set $S$ solves the \textsc{Knapsack} problem.

\section{Proof of Theorem~\ref{th:hard-undirected}}
  We consider the decision version of \ANM with an input parameter $B$.
  It is clear that the problem is in NP. 
  To prove hardness, we reduce from the \textsc{3-Partition} problem.
  In the \textsc{3-Partition} problem, we are given a set $S=\SET{x_1, \ldots, x_{3m}}$ of $3m$ numbers $x_i$, each bounded by a polynomial in $m$.
  Writing $s = \sum_{i} x_i$ for the sum of all numbers, the numbers further satisfy that $\frac{s}{4m} < x_i < \frac{s}{2m}$.
  The goal is to decide if $S$ can be partitioned into disjoint triples $T_1, \ldots, T_m$ such that each triple
  \footnote{The restriction on the $x_i$ ensures that if there is \emph{any} way to partition the numbers into $m$ sets of equal sum, each set must be a triple.} has the same sum, i.e., $\sum_{x_k \in T_j} x_k = \frac{1}{m} \cdot \sum_{k=1}^{3m} x_k$.
  The \textsc{3-Partition} problem is strongly NP-complete~\citep{garey:johnson}, meaning that it is NP-complete even when the $x_i$ are polynomially bounded in $m$.\footnote{Recall that the \textsc{Knapsack} problem we reduced from earlier is NP-hard only when the values and weights can be exponentially large in the number of items.}
    
  Given an instance of the \textsc{3-Partition} problem, we construct an instance of the \ANM problem as follows.
  Let $\epsilon$ be a small (though polynomial) constant, and $\ALT > 0$ an arbitrary constant.
  The set $V$ of agents comprises two types: a set $V_1$ of $3m$ agents $u_i$ (each corresponding to one number $x_i$) and a set $V_2$ of $m$ agents $v_j$ (each corresponding to one triple $T_j$).
  The target investment profile \EISV is that all agents invest.
  The graph $H$ is a clique on $V$.
  The input altruism network \GraphIn is a complete graph on $V_1$.
  
  For each node $u_i \in V_1$, let $\Diff{u_i}{-} = x_i$, and $\Threshold{u_i} = \ALT \cdot (s + \epsilon - x_i)$.
  For each node $v_j \in V_2$, let $\Diff{v_j}{-} = \epsilon$ and $\Threshold{v_j} = \ALT \cdot s/m$. 
  This setup can be realized by, e.g., setting $\GFUNC[u_i]$ and $\GFUNC[v_j]$ to be linear functions with slopes $x_i$ and $\epsilon$, respectively, w.r.t. the second argument.
  In addition, the $\GFUNC[u_i]$ and $\GFUNC[v_j]$ are constant functions w.r.t. the first argument and $\InvCost{u_i}=a\cdot(s+\epsilon-x_i)$ and $\InvCost{v_j}=a\cdot s/m$, which ensures that we have the above $\Threshold{u_i}$ and $\Threshold{v_j}$.
  The addition/deletion cost for each edge is $1$, and the total budget for additions/deletions is $3m$.  
  The reduction obviously takes polynomial time.

  First, suppose that there are triples $T_1, \ldots, T_m$ such that $\sum_{x_k \in T_j} x_k = s/m$ for all $j$. Consider adding the edges $(u_i, v_j)$ if and only if $x_i \in T_j$. This adds exactly $3m$ edges, so it satisfies the budget constraint. Each node $u_i \in V_1$ is incident on exactly one new edge to a node $v_j$ with $\Diff{v_j}{-} = \epsilon$.
    $u_i$ will not deviate from investing as in the resulting altruism network \GraphOut (where recall that all off-diagonal entries equal the constant \ALT),

    \begin{align*}
      \sum_{j \in V \setminus \SET{u_i}} \Alt{u_i}{j} \Diff{j}{-}
      & = \ALT (\epsilon + \sum_{j \neq i} x_j)
      \; = \; \Threshold{u_i}.
    \end{align*}
    No node $v_j$ will deviate from investing since
    \begin{align*}
      \sum_{i \in V \setminus \SET{v_j}} \Alt{j}{i} \Diff{i}{-}
      & = \sum_{x_i \in T_j} \ALT \cdot \Diff{i}{-}
      \; = \; \ALT \cdot s/m
      \; = \; \Threshold{v_j}.
   \end{align*}

   Next, we show the converse direction.
    Suppose that there is a set of at most $3m$ edges \EdgeOut, such that the modified altruism network is produced by adding \EdgeOut to \ALTMIN and that all agents investing is a PSNE of the resulting game.
    As $\Threshold{u_i} > \ALT \cdot (s-x_i)$ for all $i$, each agent $u_i$ must be incident on at least one edge to some $v_j$. Because there are $3m$ edges in total, each node $u_i$ must be incident on \emph{exactly} one edge to some $v_j$. Thus, we can define a partition $S_1, \ldots, S_m$ of $\SET{x_1, \ldots, x_{3m}}$ via $S_j = \Set{x_i}{u_i \text{ and } v_j \text{ are connected}}$.
    For each $j$, because $v_j$ invests in \EISV (which is a PSNE), we must have that $\ALT \cdot \sum_{x_i \in S_j} x_i \geq \ALT \cdot s/m$.
    But because $\sum_j \sum_{x_i \in S_j} x_i = s$ (because the $S_j$ form a partition), all of the preceding inequalities hold with equality.
    Thus, the $S_j$ form a partition into sets of the same sum, and by the restriction on the $x_i$ values, they must be a partition into triples.

\dkedit{
\section{Proof of Theorem~\ref{th:fea-hard-undirected}}
  Given an arbitrary action profile \EISV, checking the feasibility of \ANM is equivalent to deciding if the principal can modify the original altruism network \GraphIn (by adding/removing edges), such that \EISV is an equilibrium of \BNPG{\GraphOut}.
    
  We now show NP-hardness even when the principal has an infinite budget, or --- equivalently --- when all edge costs \EdgeSetCost{k} are 0.
  We prove NP-hardness again by reducing from the 
  \textsc{3-Partition} problem, and use the same notation as in the proof of Theorem~\ref{th:hard-undirected}.
  
  In the \textsc{3-Partition} problem, we are given a set $S=\SET{x_1, \ldots, x_{3m}}$ of $3m$ numbers $x_i$, each bounded by a polynomial in $m$.
  Writing $s = \sum_{i} x_i$ for the sum of all numbers, the numbers further satisfy that $\frac{s}{4m} \leq x_i \leq \frac{s}{2m}$.
  The goal is to decide if $S$ can be partitioned into disjoint triples $T_1, \ldots, T_m$ such that each triple\footnote{The restriction on the $x_i$ ensures that if there is \emph{any} way to partition the numbers into $m$ sets of equal sum, each set must be a triple.} has the same sum, i.e., $\sum_{x_k \in T_j} x_k = \frac{1}{m} \cdot \sum_{k=1}^{3m} x_k$.
    
  Given an instance of the \textsc{3-Partition} problem, we construct an instance of the \ANM problem as follows.
  The set $V$ of agents again comprises two types: a set $V_1$ of $3m$ agents $u_i$ (each corresponding to one number $x_i$) and a set $V_2$ of $m$ agents $v_j$ (each corresponding to one desired triple $T_j$).
  The target investment profile \EISV is that the $m$ agents in $V_2$ invest while the $3m$ agents in $V_1$ do not invest.
  The graph $H$ is again a complete graph on $V_1 \cup V_2$.
  The input altruism network \GraphIn is an empty graph, and, as stated above, all edge additions cost 0.

  For each node $u_i \in V_1$, let $\Diff{u_i}{-} = \Diff{u_i}{+} = x_i$, and $\Threshold{u_i} =  \ALT$.
  For each node $v_j \in V_2$, let $\Diff{v_j}{-} = \Diff{v_j}{+} = 1$ and $\Threshold{v_j} = \ALT \cdot (s/m + m-1)$. 
  This setup can be realized by, e.g., setting $\GFUNC[u_i]$ and $\GFUNC[v_j]$ to be linear functions with slopes $x_i$ and $1$, respectively, w.r.t.~the second argument.
  In addition, the $\GFUNC[u_i]$ and $\GFUNC[v_j]$ are constant functions w.r.t.~the first argument and $\InvCost{u_i}=\ALT$ and $\InvCost{v_j}= \ALT \cdot (s/m+m-1)$, which ensures that we have the above $\Threshold{u_i}$ and $\Threshold{v_j}$.
  As mentioned previously, the addition/deletion cost for each edge is $0$; equivalently, the principal's budget is infinite.
  The reduction obviously takes polynomial time.

  First, suppose that there are triples $T_1, \ldots, T_m$ such that $\sum_{x_k \in T_j} x_k = s/m$ for all $j$. Consider adding the edges $(u_i, v_j)$ if and only if $x_i \in T_j$, and also adding the edges $(v_j, v_{j'})$ for all $j \neq j'$. Each node $u_i \in V_1$is incident on exactly one new edge to a node $v_j$ with $\Diff{v_j}{-} = 1$.
  Not investing is still a best response for $u_i$, because
    \vspace{-0.1in}
    \begin{align*}
      \sum_{j \in V \setminus \SET{u_i}} \Alt{u_i}{j} \Diff{j}{+}
          & = \ALT \cdot 1
          \; = \; \Threshold{u_i};
    \end{align*}
  where recall that all off-diagonal entries equal the constant \ALT.
  Also, no node $v_j$ will deviate from investing since
  \begin{align*}
    \sum_{i \in V \setminus \SET{v_j}} \Alt{j}{i} \Diff{i}{-} & = \sum_{x_i \in T_j} \ALT \cdot \Diff{i}{-} + \ALT \cdot (m-1) \\
                                                              & = \ALT \cdot (s/m + m-1) \\
                                                              & = \Threshold{v_j}.
    \end{align*}

   Next, we show the converse direction.
   Suppose that there is a set of edges \EdgeOut, such that under the altruism network \ALTMIN with edge set \EdgeOut, the profile \EISV is a PSNE of the game.
   Because $\Diff{u_i}{+}, \Diff{v_j}{+} \geq 1$ for all nodes $u_i$ and $v_j$, and $\Threshold{u_i} = \ALT$, each node $u_i$ can be incident on at most one edge in \EdgeOut; otherwise, $u_i$ would deviate to investing.

   Now consider a node $v_j$, which must be investing under \EISV.
   Therefore, 
\begin{align*}
  \sum_{i \in V \setminus \SET{v_j}} \Alt{j}{i} \Diff{i}{-}
  &   \; \geq \; \Threshold{v_j}.
\end{align*}
  By uniform edge weights and the definitions of the \Diff{i}{-}, the sum is equal to $\ALT \cdot (\sum_{u_i: (u_i, v_j) \in \EdgeOut} x_i + \sum_{v_{j'}: (v_{j'}, v_j) \in \EdgeOut} 1)$, and by definition, 
$ \Threshold{v_j} = \ALT \cdot (s/m + m-1)$.
Therefore, we obtain that
$\sum_{u_i: (u_i, v_j) \in \EdgeOut} x_i + \sum_{v_{j'}: (v_{j'}, v_j) \in \EdgeOut} 1 \geq s/m + m-1$.
Because the number of neighbors of $v_j$ in $V_2$ is at most $m-1$, this implies in particular that
$\sum_{u_i: (u_i, v_j) \in \EdgeOut} x_i \geq s/m$.
Summing this inequality over all nodes $v_j \in V_2$, we get that
$\sum_{j} \sum_{u_i: (u_i, v_j) \in \EdgeOut} x_i \geq s$.
But because each node $u_i$ occurs in the sum for at most one node $v_j$, each $x_i$ occurs at most once, so we also get that
$\sum_j \sum_{u_i: (u_i, v_j) \in \EdgeOut} x_i \leq s$, which implies equality, and thus also that
$\sum_{u_i: (u_i, v_j) \in \EdgeOut} x_i = s/m$ for all $j$.
(If the inequality were strict for some $j$, then it would have to be violated for some other $j'$ for the sum to equal $s$.)

Thus, the sets $S_1, \ldots, S_m$ defined via $S_j = \Set{x_i}{u_i \text{ and } v_j \text{ are connected}}$ form a partition of $\SET{x_1, \ldots, x_{3m}}$, such that $\sum_{x_i \in S_j} x_i = s/m$. 
Furthermore, by the restriction on the $x_i$ values, they must be a partition into triples.
}

\section{Proof of Theorem~\ref{th:tract}}

     We show the theorem through a non-trivial reduction to the \textsc{Min-Cost Perfect Matching} problem via a connection to another related problem: \textsc{Network Design for Degree Sets} (NDDS)~\citep{kempe2020inducing}.
      We begin by introducing the NDDS problem.
      \smallskip
      \begin{definition}[NDDS]\label{def:NDDS}
        Given a simple, undirected, and loop-free graph $\NDSSGraphIn=(V, \NDSSEdgeIn)$,
        costs $\EdgeSetCost{i,j} \ge 0$ for each node pair\footnote{%
      If $(i,j) \in \NDSSEdgeIn$, then this is the cost of removing $(i,j)$; otherwise, it is the cost of adding $(i,j)$.} $(i,j)$, and target degree sets \InvDeg{i} which are intervals, i.e., of the form $\SET{\ell_i, \ell_i + 1, \ldots, r_i}$. 
        The problem NDDS is to change \NDSSEdgeIn to a graph $\NDSSEdgeOut=(V, \NDSSEdgeOut)$ by adding/deleting edges, such that the total cost is minimized and for all nodes $i$, the resulting degree $\Degree[\NDSSGraphOut]{i} \in \InvDeg{i}$.
      \end{definition}

      \citet{kempe2020inducing} showed that the NDDS problem as defined in Definition~\ref{def:NDDS} can be solved in polynomial time.
      The formal statement is below:
      \begin{lemma}[Theorem 4.1 of~\citet{kempe2020inducing}] \label{lemma:NDDS-poly}
         The NDDS problem can be solved in polynomial time, via a reduction to weighted non-bipartite minimum-cost matching.
      \end{lemma}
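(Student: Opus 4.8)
The plan is to recognize NDDS as an instance of the classical \emph{minimum-cost degree-constrained subgraph} problem with \emph{interval} degree bounds, and then to invoke the standard Tutte-style reduction of such problems to minimum-cost perfect matching in a (non-bipartite) auxiliary graph, which Edmonds' blossom algorithm solves in polynomial time. This is precisely the content of Theorem~4.1 of~\citet{kempe2020inducing}, whose argument I would reconstruct as follows.

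First I would reformulate the objective as a pure subgraph-selection cost. Since costs $\EdgeSetCost{i,j}$ are given for every node pair, I would work over the complete graph $K_n$ on $V$ and let a candidate solution be any edge set $\NDSSEdgeOut \subseteq \binom{V}{2}$. Using the semantics of the costs (removing an input edge or adding a non-input edge each incurs $\EdgeSetCost{i,j}$), I would assign to each pair $(i,j)$ a \emph{selection weight} $w_{i,j}$ equal to $-\EdgeSetCost{i,j}$ if $(i,j) \in \NDSSEdgeIn$ and $+\EdgeSetCost{i,j}$ otherwise. A one-line calculation then shows that the total modification cost of any $\NDSSEdgeOut$ equals the fixed constant $\sum_{(i,j) \in \NDSSEdgeIn} \EdgeSetCost{i,j}$ plus $\sum_{(i,j) \in \NDSSEdgeOut} w_{i,j}$. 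Hence NDDS is exactly the following: among all subgraphs of $K_n$ in which each node $i$ has degree $\Degree[\NDSSGraphOut]{i} \in \InvDeg{i} = \SET{\ell_i, \ldots, r_i}$, minimize $\sum_{(i,j)\in \NDSSEdgeOut} w_{i,j}$.

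Next I would build the matching instance. For each candidate edge $e=(i,j)$ of $K_n$ I would create two \emph{port} vertices, one on $i$'s side and one on $j$'s side, joined by an edge of weight $w_{i,j}$; placing $e$ into $\NDSSEdgeOut$ corresponds to matching these two ports to each other, whereas leaving $e$ out corresponds to matching each port into its vertex's local gadget. For each node $i$ I would then attach an auxiliary gadget of $O(n)$ zero-weight vertices whose role is to absorb exactly the ports of $i$ that are \emph{not} selected; the gadget must be constructed so that it admits a perfect internal matching precisely when the number of absorbed ports lies in $[(n-1)-r_i,\, (n-1)-\ell_i]$, equivalently when the selected degree of $i$ lies in $[\ell_i, r_i]$. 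Finally I would run a minimum-cost perfect matching algorithm on this graph; since all gadgets have polynomially many vertices and zero-cost internal edges, the instance has size polynomial in $n$, and Edmonds' algorithm returns the optimum in polynomial time. Decoding the optimal matching (select $e$ iff its two ports are matched to each other) yields the optimal $\NDSSEdgeOut$.

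The two steps I expect to require the most care are (i) designing the per-vertex gadget so that it enforces an entire \emph{interval} $[\ell_i,r_i]$ of admissible degrees, rather than the single exact value handled by the textbook $f$-factor gadget --- this needs extra ``slack'' vertices that can be matched internally in variable numbers while keeping the gadget perfectly matchable for \emph{every} feasible degree, and the parity of the internal matching must be handled so that all intermediate degrees are realizable --- and (ii) proving that the resulting correspondence between perfect matchings of the auxiliary graph and interval-feasible subgraphs is a cost-preserving bijection (up to the additive constant above). I would also note that the weights $w_{i,j}$ may be negative; this is harmless for minimum-cost perfect matching (unlike for shortest paths), but if a nonnegative instance is preferred one can add a sufficiently large constant to every port-port edge and compensate for it through the fixed selected-degree count forced by the gadgets.
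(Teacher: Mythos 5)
This paper never proves the lemma itself --- it is imported verbatim as Theorem~4.1 of \citet{kempe2020inducing} --- and the method named in the statement (reduction to weighted non-bipartite minimum-cost perfect matching) is exactly what you reconstruct: your reformulation of the modification cost as a fixed constant plus selection weights, followed by the Tutte-style port-and-gadget construction enforcing interval degree bounds and a run of Edmonds' blossom algorithm, is the standard argument behind that cited theorem. So your proposal is correct and takes essentially the same approach as the paper's (cited) proof, and you rightly flag the only delicate points, namely the parity handling in the interval gadget and the cost-preserving correspondence between perfect matchings and feasible subgraphs.
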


      When the agents have \textsc{USL} utility functions, the marginal benefits are all identical, i.e., $\Diff{i}{-}=\Diff{i}{+}=\MarginB$.
      In the following discussion, we assume that \MarginB is strictly positive.
      Otherwise, there is no motivation for network changes: whether \EISV is a PSNE of \BNPG{\GraphOut} is solely determined by $\Threshold{1}, \ldots, \Threshold{n}$. 

      The intuition for the reduction is as follows: for any pair of nodes $(i,j)$ which do not share an edge in $H$, an edge in the altruism network is useless, so we make the cost of such edges $(i,j)$ infinite.
      When $i$ and $j$ do affect each other, an edge between them contributes 
      \MarginB to the total altruistic utility experienced by $i$.
      Thus, whether $i$ crosses her investment threshold or not only depends on the \emph{number} of neighbors she has in the altruism network $G$, i.e., on her degree.
      When $i$ is to invest, this number must be large enough (at least a given threshold), while if $i$ is not to invest, it must be small enough (below a given threshold).

      We now describe the reduction more formally.
      Slightly abusing notation, a BNPG with underlying altruism network $G$ is denoted by \BNPG{G}.
      The input is an instance of \ANM, consisting of an input altruism graph $\GraphIn = (V, \EdgeIn)$, the desired PSNE \EISV, and costs \EdgeSetCost{i, j} for edge addition/deletion.
      In what follows, we show how to construct an instance of \NDDS on a graph $\NDSSGraphIn = (V, \NDSSEdgeIn)$ and show that the costs of solving the two instances are equal.
      As suggested before, \NDSSEdgeIn contains the (undirected) edge $(i, j)$ if and only if $j \in \Neigh[H]{i} \cap \Neigh[\GraphIn]{i}$; in other words, edges such that $i$'s actions do not affect $j$ are removed from the altruism network.
      Next, we define the target degree sets $D_i$ for each node.
      When $i \in \InvPlayer$,
      we define $D_i := \SET{\Ceiling{\Threshold{i} / (\ALT \cdot \MarginB )}, \ldots, n-1}$;
      when $i \notin \InvPlayer$,
      the set is $D_i := \SET{0, \ldots, \Floor{\Threshold{i} / (\ALT \cdot \MarginB )}}$.
      Finally, we define the cost \EdgeSetCostTmp{i,j} associated with a node pair $(i, j)$.
      When $(i, j) \in \NDSSEdgeIn$, we set $\EdgeSetCostTmp{i, j} = \EdgeSetCost{i, j}$.
      When $(i, j) \notin \NDSSEdgeIn$, if $i \in \Neigh[H]{j}$, the cost is $\EdgeSetCostTmp{i, j}=\EdgeSetCost{i, j}$; otherwise, it is $\EdgeSetCostTmp{i, j}=\infty$, meaning that the edge $(i,j)$ can not be added.
      It is clear that the construction runs in polynomial time.
      The correctness of the construction is captured by Theorem~\ref{th:construction},
      By Lemma~\ref{lemma:NDDS-poly}, \NDDS is polynomial-time solvable; therefore, we obtain a polynomial-time algorithm for this variant of \ANM.

      \begin{theorem}\label{th:construction}
         The minimum cost of modifying $\GraphIn=(V, \EdgeIn)$ to a graph $\GraphOut=(V, \EdgeOut)$ such that \EISV is a PSNE of \BNPG{\GraphOut} is equal to the minimum cost of modifying $\NDSSGraphIn=(V, \NDSSEdgeIn)$ to a graph $\NDSSGraphOut=(V, \NDSSEdgeOut)$ such that $\Degree[\NDSSGraphOut]{i} \in D_i$ for all $i$.
      \end{theorem}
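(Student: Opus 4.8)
The plan is to prove that the two optima coincide by exhibiting a \emph{cost-preserving} correspondence between feasible \ANM modifications (once restricted to the strategically relevant edges) and feasible \NDDS modifications, and then matching minima in both directions. The engine of the reduction is the defining feature of \textsc{USL} utilities, namely $\Diff{j}{-} = \Diff{j}{+} = \MarginB$ for every $j$, which I would use to collapse the PSNE inequalities of Equation~\eqref{E:psne} into pure degree constraints. Since $H$ is loop-free and every off-diagonal altruism weight is either $\ALT$ (edge present) or $0$ (edge absent), the left-hand side of each inequality equals $\ALT \cdot \MarginB$ times the number of $i$'s altruism neighbors that also lie in $\Neigh[H]{i}$. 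Reading this count as the degree $\Degree[\NDSSGraphOut]{i}$ in the graph $\NDSSGraphOut$ (which, by construction, retains only strategically relevant edges), the requirement for $i \in \InvPlayer$ becomes $\Degree[\NDSSGraphOut]{i} \ge \Threshold{i}/(\ALT \cdot \MarginB)$ and, for $i \notin \InvPlayer$, $\Degree[\NDSSGraphOut]{i} \le \Threshold{i}/(\ALT \cdot \MarginB)$. Because degrees are integers, these are exactly $\Degree[\NDSSGraphOut]{i} \ge \Ceiling{\Threshold{i}/(\ALT \cdot \MarginB)}$ and $\Degree[\NDSSGraphOut]{i} \le \Floor{\Threshold{i}/(\ALT \cdot \MarginB)}$, i.e., membership in the intervals $D_i$. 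This step establishes that a modified altruism graph induces \EISV as a PSNE if and only if its strategically relevant subgraph meets every target degree set.

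Next I would argue that strategically irrelevant pairs $(i,j)$, those with $j \notin \Neigh[H]{i}$, can be disregarded for costing. Such an edge contributes nothing to any PSNE inequality, so adding one is pure waste and removing an initially present one is pure waste; as all costs are nonnegative, some optimal \ANM solution leaves every irrelevant pair in its original state. In the \NDDS instance these same pairs carry cost $\infty$ and are excluded from $\NDSSEdgeIn$, so no finite-cost \NDDS solution touches them either. Restricting to the relevant pairs, the four cases (in/out of $H$ crossed with in/out of \GraphIn) confirm that the per-pair addition and removal costs coincide with $\EdgeSetCostTmp{i,j}$ and that $\NDSSEdgeIn$ is precisely the initial relevant subgraph. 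I would then close in two directions: from any finite-cost \ANM modification, discarding its (necessarily wasteful) changes to irrelevant pairs yields a feasible \NDDS solution of no larger cost; conversely, any feasible \NDDS solution, extended by leaving every irrelevant pair at its \GraphIn state, is a feasible \ANM modification of equal cost. Taking minima over both sides yields the claimed equality.

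The main obstacle I anticipate is not any single computation but the bookkeeping needed to make the ``restrict to relevant edges'' step airtight. In particular, I must confirm that the symmetric (undirected) nature of altruism is faithfully mirrored by undirected degrees: a single edge $(i,j)$ simultaneously advances the counts of both $i$ and $j$, and \textsc{USL} guarantees it advances each by the same amount $\ALT \cdot \MarginB$, so the two endpoints' degree constraints stay decoupled exactly as the two PSNE inequalities do. I must also verify that the cost assignment $\EdgeSetCostTmp{i,j}$ correctly separates the removal regime ($(i,j) \in \NDSSEdgeIn$) from the addition regime while forbidding precisely the irrelevant pairs. Handling possibly empty or degenerate intervals $D_i$, which signal infeasibility simultaneously on both sides, is a minor additional check.
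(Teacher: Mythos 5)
Your proposal is correct and follows essentially the same route as the paper's proof: use the USL property to rewrite the PSNE inequalities as integer degree constraints (with the ceiling/floor giving the intervals $D_i$), observe that edges outside $H$ are never touched by a minimum-cost solution on the \ANM side and are forbidden by infinite cost on the \NDDS side, and then exhibit the cost-preserving correspondence between the two sets of modifications in both directions. The only cosmetic difference is that you phrase the irrelevant-edge step as a without-loss-of-generality restriction on optimal solutions, whereas the paper argues it via minimality and then verifies directly that the symmetric differences $\EdgeOut \triangle \EdgeIn$ and $\NDSSEdgeOut \triangle \NDSSEdgeIn$ coincide.
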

      \begin{proof}
        Because $\Diff{i}{-} = \Diff{i}{+} = \MarginB$ for all $i$ and $n_i$, the action profile \EISV is a PSNE of \BNPG{\GraphOut} if and only if:
        \begin{equation}\label{eq:suff-neighbor}
           \begin{aligned}
             \SetCard{\Neigh[H]{i} \cap \Neigh[\GraphOut]{i}} & \geq \Threshold{i} / (\ALT \cdot \MarginB)
             \quad \text{ for all } i \in \InvPlayer \\ 
             \SetCard{\Neigh[H]{i} \cap \Neigh[\GraphOut]{i}} & \leq \Threshold{i} / (\ALT \cdot \MarginB)
             \quad \text{ for all } i \notin \InvPlayer.
           \end{aligned}
         \end{equation}     
                      
         For the first direction, let $\GraphOut=(V, \EdgeOut)$ be an undirected unweighted altruism network such that \EISV is a PSNE of \BNPG{\GraphOut}.
         Let $B = \sum_{e \in \EdgeIn \triangle \EdgeOut} \EdgeCost{e}$ be the cost of producing \GraphOut.
         We construct a graph \NDSSGraphOut such that $\Degree[\NDSSGraphOut]{i} \in D_i$ for all $i$.
                  
         Given the graphs $H$ and \GraphOut, the construction of $\NDSSGraphOut=(V, \NDSSEdgeOut)$ is similar to that of \NDSSGraphIn.
         \NDSSGraphOut contains the edge $(i, j)$ iff $j \in \Neigh[H]{i} \cap \Neigh[\GraphOut]{i}$.
         The neighborhood of a node $i$ is thus $\Neigh[\NDSSGraphOut]{i}=\Set{j}{j \in \Neigh[H]{i} \cap \Neigh[\GraphOut]{i}}$.
         The degree of the node is $\Degree[\NDSSGraphOut]{i} = \SetCard{\Neigh[H]{i} \cap \Neigh[\GraphOut]{i}}$.
         As \EISV is a PSNE of \BNPG{\GraphOut}, Equation~\eqref{eq:suff-neighbor} implies that for a player $i \in \InvPlayer$ (respectively, $i \notin \InvPlayer$), the degree satisfies $\Degree[\NDSSGraphOut]{i} \geq \Ceiling{\Threshold{i}/ (\ALT \cdot \MarginB)}$ (resp. $\Degree[\NDSSGraphOut]{i} \leq \Floor{\Threshold{i}/ (\ALT \cdot \MarginB)}$).
         Thus, the degree $\Degree[\NDSSGraphOut]{i} \in D_i$ for all $i$.
                  
         Now, we show that the cost of modifying \NDSSGraphIn to \NDSSGraphOut is equal to $B$. 
         Let $\EdgeChange := \EdgeOut \triangle \EdgeIn$ be the modification from \GraphIn to \GraphOut.
         Similarly, the modification from \NDSSGraphIn to \NDSSGraphOut is  $\NDSSEdgeChange :=  \NDSSEdgeOut \triangle \NDSSEdgeIn$.
         We now show that $\EdgeChange = \NDSSEdgeChange$, in particular implying that the proposed modification has cost at most $B$.
                  
         First, observe that every edge $(i, j) \in \EdgeChange$ must be in $H$. Otherwise, adding/deleting $(i,j)$ has no impact on altruistic behavior, so $B$ would not have minimum cost. We now reason as follows:
         \begin{itemize}
           \item By definition, $(i, j) \in \EdgeOut \setminus \EdgeIn$ iff $i$ and $j$ are neighbors in $H$ and \GraphOut, but not in \GraphIn.
           This is the case if and only if $(i,j)$ are not neighbors in \NDSSGraphIn, but are neighbors in \NDSSGraphOut, which is the case if and only if $(i, j) \in \NDSSEdgeOut \setminus \NDSSEdgeIn$.
         \item Similarly, $(i, j) \in \EdgeIn \setminus \EdgeOut$ iff $i$ and $j$ are neighbors in $H$ and \GraphIn, but not in \GraphOut.
           This is the case if and only if $(i,j) \in \NDSSEdgeIn \setminus \NDSSEdgeOut$
         \end{itemize}
         Therefore, we have shown that $\EdgeChange = \NDSSEdgeChange$

         For the converse direction, we let \NDSSGraphOut be the minimum-cost modification of \NDSSGraphIn such that $\Degree[\NDSSGraphOut]{i} \in D_i$ for all $i$.
         Let $B$ be the cost of modifying $\NDSSGraphIn$ to \NDSSGraphOut.
         We first observe that every edge $(i, j) \in \NDSSEdgeChange$ must be in $H$.
         If $(i, j) \in \NDSSEdgeIn \setminus \NDSSEdgeOut$, this follows from the definition of \NDSSEdgeIn, whereas if $(i, j) \in \NDSSEdgeOut \setminus \NDSSEdgeIn$, it is because the cost of adding $(i,j)$ would be infinite if $(i,j)$ were not in $H$.

         Next, we construct the graph \GraphOut.
         By the construction of \NDSSGraphIn, $(i, j) \in \NDSSEdgeIn \setminus \NDSSEdgeOut$ implies that $i$ and $j$ are connected in \GraphIn, whereas $(i, j) \in \NDSSEdgeOut \setminus \NDSSEdgeIn$ implies that $i$ and $j$ are \emph{not} connected in \GraphIn.
         In the former case, $(i, j)$ is deleted from \GraphIn, i.e., $(i, j) \in \EdgeIn \setminus \EdgeOut$, while in the latter case, $(i, j)$ is added to \GraphIn, i.e., $(i, j) \in \EdgeOut \setminus \EdgeIn$.
         This modification is applied to all edges in $\NDSSEdgeChange$, producing a graph \GraphOut at cost $B$.
                  
         It remains to show that \EISV is a PSNE of \BNPG{\GraphOut}.
         Consider any node $i \in V$.
         By construction of \GraphOut, the set of neighbors added for $i$ is
         $\Neigh[H]{i} \cap (\Neigh[\GraphOut]{i} \setminus \Neigh[\GraphIn]{i})$.
         Similarly, the set of neighbors removed from $i$ is
         $\Neigh[H]{i} \cap (\Neigh[\GraphIn]{i} \setminus \Neigh[\GraphOut]{i})$.

         The degree of node $i$ in the graph \NDSSGraphOut can be expressed in these terms as 
         \begin{align*}
           \Degree[\NDSSGraphOut]{i}
           & = \Degree[\NDSSGraphIn]{i}
             + \SetCard{\Neigh[H]{i} \cap (\Neigh[\GraphOut]{i} \setminus \Neigh[\GraphIn]{i})} \\
             & - \SetCard{\Neigh[H]{i} \cap (\Neigh[\GraphIn]{i} \setminus \Neigh[\GraphOut]{i})} \\
           & = \SetCard{\Neigh[H]{i} \cap \Neigh[\GraphIn]{i}}
             + \SetCard{\Neigh[H]{i} \cap (\Neigh[\GraphOut]{i} \setminus \Neigh[\GraphIn]{i})} \\
             & - \SetCard{\Neigh[H]{i} \cap (\Neigh[\GraphIn]{i} \setminus \Neigh[\GraphOut]{i})} \\
           & = \SetCard{\Neigh[H]{i} \cap \Neigh[\GraphOut]{i}}.
        \end{align*}
        As $\Degree[\NDSSGraphOut]{i} \in D_i$, it follows that $\SetCard{\Neigh[H]{i} \cap \Neigh[\GraphOut]{i}} \geq \Ceiling{\Threshold{i} / (\ALT \cdot \MarginB)}$ for $i \in \InvPlayer$, and 
        $\SetCard{\Neigh[H]{i} \cap \Neigh[\GraphOut]{i}} \leq \Floor{\Threshold{i} / (\ALT \cdot \MarginB)}$ for $i \notin \InvPlayer$.
        Thus, \EISV is a PSNE of \BNPG{\GraphOut}.
      \end{proof}

\end{document}